\newtheorem{thm}{Theorem}
\newtheorem{lem}{Lemma}
\newtheorem{defi}{Definition}
\newtheorem{prop}{Proposition}
\def\>{\rangle}
\def\<{\langle}
\begin{document}
\title{No Bound Randomness in Quantum Nonlocality}

\author{Ravishankar Ramanathan}
\email{ravi@cs.hku.hk}
\affiliation{School of Computing and Data Science, The University of Hong Kong, Pokfulam Road, Hong Kong}
\author{Yuan Liu}
\affiliation{School of Computing and Data Science, The University of Hong Kong, Pokfulam Road, Hong Kong}
\author{Yutian Wu}
\affiliation{School of Computing and Data Science, The University of Hong Kong, Pokfulam Road, Hong Kong}
\author{Stefano Pironio}
\affiliation{Laboratoire d'Information Quantique, CP224, Universit\'{e} libre de Bruxelles, 1050 Brussels, Belgium}


\date{\today}


\onecolumngrid


\begin{abstract}
Recently it has been found that there exist maximally nonlocal quantum correlations that fail to certify randomness for any fixed input pair, rendering them useless for device-independent spot-checking randomness expansion schemes.  Here we show that conversely, in DI randomness amplification protocols where all input pairs are used for randomness generation, any amount of quantum nonlocality is sufficient to certify randomness. This shows that no bound randomness exists in quantum nonlocality - any quantum nonlocal behavior is useful in a DI randomness generation task with appropriate modification of protocol structure. Secondly, we show that in contrast to the hitherto considered fixed-input guessing probability, the average guessing probability over all inputs is a faithful and monotonic measure of nonlocality. We use the average guessing probability to show that in contrast to findings in PRL 134, 090201, the detection efficiency threshold for randomness generation is never lower than that for nonlocality detection. Finally, we analytically compute the  average guessing probability by a quantum adversary of a single player's measurement outputs in a standard CHSH Bell test, and use it to demonstrate an improvement in the generation rate in state-of-art amplification protocols. 

\end{abstract}

\maketitle

\textit{Introduction.-}
The outcomes of local measurements on entangled quantum systems can be certified to be genuinely random through the violation of a Bell Inequality. The foundational phenomenon of Bell nonlocality has given rise to the possibility of Device-Independent (DI) quantum cryptography. In the DI framework, the honest users do not need to trust even the very devices used in the cryptographic protocol. They can instead directly verify the correctness and security by means of simple statistical tests of the device's input-output behavior, specifically checking for the violation of a Bell inequality. This violation would certify that the outputs of the device could not have been pre-determined, and as such could not be perfectly guessed by an adversary, allowing for the possibility of securely extracting randomness or secret key.

\textit{The setting.-} We consider the scenario in which Alice and Bob possess a pair of quantum devices or boxes that they can prevent from communicating (by shielding or separating them). The state of their devices is correlated with that of a device in possession of adversary Eve, so that the joint state of the three devices is described by a quantum state $\rho_{\mathcal{A}\mathcal{B}\mathcal{E}} \in \mathcal{S}(\mathcal{H}_{\mathcal{A}} \otimes \mathcal{H}_{\mathcal{B}} \otimes \mathcal{H}_{\mathcal{C}})$, where $\mathcal{S}(\mathcal{H})$ denotes the set of density operators in Hilbert space $\mathcal{H}$. Alice and Bob perform measurements $x \in X$ and $y \in Y$ on their respective devices with probabilities $\nu_{XY}(x,y)$, and obtain outcomes $a \in A$ and $b \in B$ respectively. This process is described by POVMs $\{M_{a|x}\}_x$ and $\{N_{b|y}\}_y$ acting on $\mathcal{H}_{\mathcal{A}}$ and $\mathcal{H}_{\mathcal{B}}$ respectively. Alice and Bob observe probabilities $p_{AB|XY}(a,b|x,y) = \text{Tr}\left[\rho_{\mathcal{A}\mathcal{B}} M_{a|x} \otimes N_{b|y} \right]$, where $\rho_{\mathcal{A}\mathcal{B}} = \text{Tr}_{\mathcal{E}} (\rho_{\mathcal{A}\mathcal{B}\mathcal{E}} )$. These $|A||B||X||Y|$ probabilities can be viewed as the components of a vector $\textbf{p} \in \mathbb{R}^{|A||B||X||Y|}$ which is usually called a (nonlocal) behavior $\textbf{p} = \{p_{AB|XY}\}$. For ease of notation, we will take $|A|=|B| = d$ and $|X| = |Y| = m$ throughout, as it doesn't affect the arguments in the paper. The state $\rho_{\mathcal{A}\mathcal{B}}$ and measurement operators $\{M_{a|x}\}_x$ and $\{N_{b|y}\}_y$ giving rise to behavior $\textbf{p}$ are known as a quantum realization $Q$ of $\textbf{p}$, a single behavior may admit several quantum realizations. The convex set of bipartite quantum behaviors is denoted by $\mathcal{Q}^{bip}$. 

In a device-independent protocol such as for randomness expansion \cite{PAM+10, BRC23, LLR+21, MS14} or randomness amplification \cite{FWE+23, KAF20, CR12, BRG+16, GMT+13, RBH+16, R23}, the adversary attempts to guess the outcomes of Alice and Bob. After Alice and Bob have performed measurements $x, y$ respectively, the correlations between their classical outcomes $a, b$ and the eavesdropper's system are described by the classical-quantum state $\sum_{a,b} p_{AB|XY}(a,b|x,y) |a,b \rangle \langle a,b| \otimes \rho_{\mathcal{E}}^{a,b,x,y}$ where $\rho_{\mathcal{E}}^{a,b,x,y}$ is the conditional state of Eve's system. Eve performs a measurement described by a $d^2$-element POVM $\{O_{e|z}\}$ in an attempt to guess $(a,b)$. 

Let us first consider the scenario of DI randomness expansion, where Alice and Bob attempt to expand a short random seed into a longer random string. DI protocols for randomness expansion are usually termed spot-checking \cite{BRC23, LLR+21, MS14}, in which each measurement round is declared to be either a test round (where the measurement settings are chosen according to some distribution $\nu(x,y)$) or a generation round (where the measurement settings are a fixed $(x^*, y^*)$), where test rounds occur with a small probability $\gamma$ (chosen using around $-\log_2 \gamma$ bits of randomness per round). The rationale behind a spot-checking protocol is that randomness is required to perform Bell tests, and it is desirable to run the protocol with a smaller amount of randomness. For instance, in the well-known CHSH inequality, if the inputs are chosen uniformly (with $\nu(x,y) = 1/4$ for all $x,y \in \{0,1\}$), then each test round requires $2$ bits of input randomness while the maximum amount of output randomness obtainable from the test is $1 + H_{\text{bin}}\left(\frac{1}{2} + \frac{1}{2\sqrt{2}}\right) \approx 1.6$ bits, where $H_{\text{bin}}$ denotes the binary entropy. In this scenario, a quantum eavesdropper attempts to guess Alice and Bob's outcomes for a fixed (but arbitrary) measurement setting $(x^*, y^*)$ given that Alice and Bob observe a behavior $\textbf{p}$. Eve's  guessing probability is then given by the solution to the following optimization problem~\cite{BSS14, NSPS14,BRC23,NSBSP18}
\begin{widetext}
\begin{eqnarray}
\label{eq:gpi-main}
   P^{(q)}_g(\textbf{p}, (x^*, y^*)) &=& \max_{\widetilde{\textbf{p}}^{a,b}} \sum_{a,b} \widetilde{p}^{a,b}_{AB|XY}(a,b|x^*,y^*) \nonumber \\
    &&\text{s.t.} \;\;\; \sum_{a,b} \widetilde{p}^{a,b}_{AB|XY}(a',b'|x',y') = p_{AB|XY}(a',b'|x',y') \; \; \; \; \forall a',b',x',y' \nonumber \\
    && \;\;\qquad \widetilde{\textbf{p}}^{a,b} \in \widetilde{\mathcal{Q}}^{bip}.
\end{eqnarray}
\end{widetext}
Here, $\widetilde{\mathcal{Q}}^{bip}$ denotes the set of non-normalised quantum behaviors, i.e., behaviors obtainable from measurements on non-normalised quantum states $\widetilde{\rho}_{\mathcal{A}\mathcal{B}}$ (i.e. with $\text{Tr}\left(\widetilde{\rho}_{\mathcal{A}\mathcal{B}} \right) \geq 0$). The quantity $P^{(q)}_g(\textbf{p}, (x^*, y^*))$ in \eqref{eq:gpi-main} is thus the relevant parameter for DI spot-checking randomness expansion protocols. We refer to Appendix \ref{sec:DIGP} for the details on the derivation of \eqref{eq:gpi-main} and other variants of the guessing probability. In \cite{RYS24}, some of us proved the following 
\begin{thm}[\cite{RYS24}]
There exists a quantum behavior $\textbf{p} \in \mathcal{Q}^{bip}$ that exhibits maximum nonlocality (achieves the maximum quantum value $\omega_{q}$ for some nonlocal game), but at the same time has the property that
\begin{eqnarray}
    P^{(q)}_g(\textbf{p}, (x^*, y^*)) = 1 \; \; \;\; \forall (x^*, y^*) \in |X| \times |Y|.
\end{eqnarray}
Therefore, even maximum quantum nonlocality is not always sufficient for spot-checking based DI randomness expansion.
\end{thm}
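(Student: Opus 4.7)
The plan is to produce an explicit quantum behavior $\textbf{p} \in \mathcal{Q}^{bip}$ together with a nonlocal game $G$ satisfying both conditions simultaneously: $\textbf{p}$ achieves the quantum maximum $\omega_q(G)$ and $P^{(q)}_g(\textbf{p},(x^*,y^*))=1$ for every input pair. The key structural observation driving the construction is that, by inspecting the optimization \eqref{eq:gpi-main}, $P^{(q)}_g(\textbf{p},(x^*,y^*))=1$ holds if and only if $\textbf{p}$ admits a decomposition $\textbf{p}=\sum_{a,b}\widetilde{\textbf{p}}^{a,b}$ into sub-normalised quantum behaviors where the $(a,b)$-branch is fully concentrated on outcome $(a,b)$ at the input $(x^*,y^*)$ (while being arbitrary at other inputs). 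Crucially, the decomposition is allowed to depend on $(x^*,y^*)$, so we need a single behavior admitting such a decomposition for each input pair separately --- a much weaker condition than having a single global decomposition into locally deterministic behaviors (which would force $\textbf{p}$ to be classical).

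My construction would proceed as follows. First, I would pick a candidate nonlocal game $G$ with a large family of inequivalent optimal quantum strategies --- natural candidates are pseudo-telepathy games such as the magic square, or a symmetrised CHSH-like game, since self-testing games a priori forbid the kind of mixing we require. Second, I would construct a family of quantum strategies $\{\textbf{p}_{i}\}_{i\in I}$ all achieving $\omega_q(G)$, indexed so that for each input pair $(x^*,y^*)$ the family partitions into classes each of which, when renormalised, is deterministic at $(x^*,y^*)$. Third, I would set $\textbf{p}=\sum_i \lambda_i\,\textbf{p}_i$ as a convex mixture; by convexity of $\mathcal{Q}^{bip}$ and linearity of the game value, $\textbf{p}\in\mathcal{Q}^{bip}$ and $\omega(\textbf{p},G)=\omega_q(G)$. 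Finally, for each fixed $(x^*,y^*)$, one groups the $\textbf{p}_i$ into the branches corresponding to each outcome $(a,b)$ at $(x^*,y^*)$ and sets $\widetilde{\textbf{p}}^{a,b}$ to be the corresponding sub-normalised sum; by construction these lie in $\widetilde{\mathcal{Q}}^{bip}$ and realise a feasible solution of \eqref{eq:gpi-main} with value $1$.

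The main obstacle, and the place where the bulk of the work sits, is step two: producing a sufficiently rich family of quantum-optimal strategies, each of which is deterministic at a designated input pair while collectively summing (with well-chosen weights) to a single coherent quantum behavior covering every $(x^*,y^*)$. This seems to require (i) a game whose optimal strategy set has dimension at least $|X||Y|$, ruling out self-testing games, and (ii) potentially enlarging the Hilbert space with ancillary registers that carry the ``deterministic branch'' information without altering the game-relevant statistics. An auxiliary obstacle is ensuring that the final $\textbf{p}$ is indeed maximally nonlocal in a nontrivial sense, not merely lying on the face of $\mathcal{Q}^{bip}$ defined by $G$; it may be cleaner to first construct $\textbf{p}$ and then invoke a supporting-hyperplane argument in $\mathcal{Q}^{bip}$ to produce an explicit Bell inequality / game witnessing its maximal nonlocality. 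If the direct symmetrisation fails, a fallback is to build $\textbf{p}$ adversarially from a small set of nonlocal vertices of the no-signalling polytope that happen to lie in $\mathcal{Q}^{bip}$, ensuring deterministic outputs on every face through a combinatorial design on $X\times Y$.
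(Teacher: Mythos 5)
Your structural reduction is correct and is indeed the mechanism underlying the theorem: $P^{(q)}_g(\textbf{p},(x^*,y^*))=1$ holds exactly when, for that input pair, $\textbf{p}$ admits a decomposition $\textbf{p}=\sum_{a,b}\widetilde{\textbf{p}}^{a,b}$ into sub-normalised quantum branches each concentrated on its labelling outcome at $(x^*,y^*)$; the decomposition may differ for each input pair; and a convex mixture of quantum-optimal strategies stays quantum-optimal by linearity of the game value. (Note also that your ``auxiliary obstacle'' is a non-issue: the theorem only asks that $\textbf{p}$ attain $\omega_q$ for \emph{some} nonlocal game, i.e., that it lie on the corresponding face of $\mathcal{Q}^{bip}$, which is exactly what your mixture construction would deliver.) But this theorem is an existence statement, and for an existence statement the construction \emph{is} the proof. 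Your proposal stops precisely where the work begins: you never exhibit a game $G$ nor a family $\{\textbf{p}_i\}$ of optimal strategies that, for \emph{every} one of the $|X||Y|$ input pairs simultaneously, partitions into outcome-deterministic classes while summing to one coherent behavior. You correctly identify this as ``the main obstacle'' and ``where the bulk of the work sits,'' but flagging the gap does not discharge it. The paper itself does not re-derive this result; it imports it from \cite{RYS24}, where an explicit behavior and Bell expression with exactly this degenerate optimal-strategy structure are constructed and the guessing decompositions are written down input pair by input pair.

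Two of your concrete suggestions would moreover fail. The magic square game is a self-testing pseudo-telepathy game: its optimal correlation is essentially unique, so it violates your own requirement (i) that the optimal set be at least $|X||Y|$-dimensional; proposing it alongside the remark that ``self-testing games a priori forbid the kind of mixing we require'' is internally inconsistent. The fallback of assembling $\textbf{p}$ from nonlocal vertices of the no-signalling polytope lying in $\mathcal{Q}^{bip}$ is vacuous, since nonlocal extremal no-signalling boxes in these bipartite scenarios are not quantumly realisable (this is why bound randomness exists against no-signalling adversaries, cf.\ Appendix~\ref{sec:DIGP}, but must be constructed differently in the quantum case). To complete the argument you would need to produce, explicitly, a Bell functional whose quantum maximiser set is a face of $\mathcal{Q}^{bip}$ rich enough to contain, for each input pair, a full set of outcome-deterministic-at-that-pair quantum behaviors with a common barycentre --- that is the content of \cite{RYS24} and it is absent here.
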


Now consider the scenario of DI randomness amplification, where Alice and Bob have access to a source of weakly random bits and their goal is to achieve a string of fully random bits. The weak sources may be considered to have the structure of a Santha-Vazirani (SV) source, wherein each bit produced by the source is biased up to $\epsilon \in [0,1/2)$ away from uniform when conditioned upon all previous bits, i.e., each bit $X_i$ produced by the $n$-bit source obeys
\begin{eqnarray}
    \frac{1}{2} - \epsilon \leq P(X_i | X_1, \ldots, X_{i-1}) \leq \frac{1}{2} + \epsilon \; \; \; \forall i = 1, \ldots, n.
\end{eqnarray} 
Alternatively the source may be considered to be a general min-entropy source, where an $(n,k)$ min-entropy source is an $n$-bit string with conditional (quantum) min-entropy $k$, meaning that the probability with which an adversary can guess the source correctly is $\leq 2^{-k}$ \cite{KRS09}. In protocols for DI randomness amplification \cite{FWE+23, KAF20, CR12, BRG+16, GMT+13, RBH+16, R23}, all runs are considered to both test runs and generation runs. That is, we consider the situation in which each input pair $(x,y)$ is chosen with some distribution $\mu_{XY}(x,y)$ and Eve attempt to guess Alice and Bob's outcomes on average over all settings $(x,y)$ given that Alice and Bob observe a behavior $\textbf{p}$. Eve's guessing probability in this case is given by the solution to the following optimization problem
\begin{widetext}
    \begin{eqnarray}
\label{eq:gpii-main}
 P^{(q)}_{g}(\textbf{p}, \mu_{XY}) &=& \max_{\{p_{ABE|XYZ}\}} \sum_{x,y} \mu_{XY}(x,y) \sum_{a,b} p_{ABE|XYZ}(a,b,e=(a,b)|x,y,z=(x,y)) \nonumber \\ 
   && \; \; \; \;\;\; \text{s.t.} \quad \sum_{e'} p_{ABE|XYZ}(a',b',e'|x',y',z') = {p}_{AB|XY}(a',b'|x',y') \; \; \; \; \forall a',b',x',y',z' \nonumber \\
   && \qquad \qquad \;  \{ p_{ABE|XYZ} \} \in \mathcal{Q}^{trip},
\end{eqnarray}
\end{widetext}
where it becomes necessary to consider the entire set of tripartite quantum correlations $\mathcal{Q}^{trip}$ for an eavesdropper with $m^2$ inputs and $d^2$ outcomes per input. The quantity $P^{(q)}_{g}(\textbf{p}, \mu_{XY})$ in \eqref{eq:gpii-main} is thus the relevant parameter for DI randomness amplification protocols. We now show the following
\begin{thm}
\label{thm:DIRA-main}
    Any quantum nonlocal behavior $\textbf{p} \in \mathcal{Q}^{bip}$ is a sufficient resource for DI randomness amplification of an $\epsilon$-Santha-Vazirani source for any $\epsilon \in [0,1/2)$. That is, for any nonlocal $\textbf{p} \in \mathcal{Q}^{bip}$, it holds that
    \begin{eqnarray}
        P^{(q)}_{g}(\textbf{p}, \mu_{XY}) < 1,
    \end{eqnarray}
    for inputs $x,y$ chosen using an $\epsilon$-SV source obeying $ (1/2 - \epsilon)^{\log_2 |X||Y|} \leq \mu_{XY}(x,y) \leq (1/2+ \epsilon)^{\log_2 |X||Y|}$.  
\end{thm}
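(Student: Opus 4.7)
\textit{Proof plan.} The plan is to argue by contraposition: I assume $P^{(q)}_g(\textbf{p},\mu_{XY})=1$ and deduce that $\textbf{p}$ must admit a local hidden-variable decomposition, contradicting the hypothesis. The first step is elementary. Since $\epsilon<1/2$ the Santha--Vazirani bound gives $\mu_{XY}(x,y)>0$ for every $(x,y)$, and bounding each summand in \eqref{eq:gpii-main} by $\mu_{XY}(x,y)$ shows that the equality $P^{(q)}_g=1$ forces $\sum_{a,b}p_{ABE|XYZ}(a,b,(a,b)|x,y,(x,y))=1$ for every input pair; equivalently, the optimal tripartite quantum adversary outputs $e=(a,b)$ with probability one whenever $z=(x,y)$.

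Next I will realize the optimum by a purification $|\Psi\rangle_{\mathcal{A}\mathcal{B}\mathcal{E}}$ together with projective measurements $\{M_{a|x}\},\{N_{b|y}\},\{O_{(a,b)|x,y}\}$ obtained via Naimark dilation. The perfect-prediction condition, combined with the positivity of $M_{a|x}\otimes N_{b|y}\otimes(I-O_{(a,b)|x,y})$ acting on $|\Psi\rangle$, will yield the vector identity $(M_{a|x}\otimes N_{b|y}\otimes I)|\Psi\rangle=(I\otimes I\otimes O_{(a,b)|x,y})|\Psi\rangle$ for every $(x,y,a,b)$, so that Alice and Bob's measurement outcomes are perfectly mirrored by Eve's at the level of state vectors. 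I then expand in a Schmidt basis $|\Psi\rangle=\sum_i\sqrt{\lambda_i}|\phi_i\rangle_{\mathcal{A}\mathcal{B}}|\psi_i\rangle_{\mathcal{E}}$ and equate coefficients; Hermiticity of both $M_{a|x}\otimes N_{b|y}$ and $O_{(a,b)|x,y}$ forces $(\lambda_i-\lambda_j)\langle\phi_j|M_{a|x}\otimes N_{b|y}|\phi_i\rangle=0$, so every measurement operator is block-diagonal with respect to the Schmidt blocks of equal weight. For non-degenerate coefficients this immediately pins each $|\phi_i\rangle_{\mathcal{A}\mathcal{B}}$ to a simultaneous eigenstate of every rank-1 projector $M_{a|x}\otimes N_{b|y}$; the automatic no-signalling of a pure bipartite state, combined with the fact that the reduced state on $\mathcal{A}$ must lie in the image of $M_{a_i(x)|x}$ for every $x$, will force $|\phi_i\rangle=|\alpha_i\rangle_{\mathcal{A}}|\beta_i\rangle_{\mathcal{B}}$ with each factor a joint eigenstate of the respective local measurements. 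Reading off the eigenvalues gives deterministic responses $a_i(x),b_i(y)$ and the explicit local model $p_{AB|XY}(a,b|x,y)=\sum_i\lambda_i\,\delta_{a,a_i(x)}\delta_{b,b_i(y)}$, contradicting the nonlocality of $\textbf{p}$.

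The main obstacle will be the refinement step inside degenerate Schmidt blocks, where the Schmidt basis is not uniquely defined and the block-diagonality alone does not fix the vectors. My plan there is to combine the tensor-product form $M\otimes N$, the mutual commutation of Alice's operators with Bob's, and the perfect-prediction identities at distinct input pairs $(x,y)$, in order to rotate within each degenerate block to a common product eigenbasis of every family $\{M_{a|x}\}$ and $\{N_{b|y}\}$. Making this joint-diagonalisation argument rigorous, and verifying that no degenerate block can support a genuinely entangled subspace on $\mathcal{H}_{\mathcal{A}}\otimes\mathcal{H}_{\mathcal{B}}$ compatible with perfect prediction at all inputs, is where I expect the technical heart of the proof to lie.
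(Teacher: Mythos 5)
Your opening steps are correct and track the paper's: from $P^{(q)}_g(\textbf{p},\mu_{XY})=1$ with full-support $\mu_{XY}$ you get perfect prediction at every input pair, and the vector identity $(M_{a|x}\otimes N_{b|y}\otimes\mathbf{1})|\Psi\rangle=(\mathbf{1}\otimes\mathbf{1}\otimes O_{(a,b)|x,y})|\Psi\rangle$ together with the resulting block-diagonality of every $M_{a|x}\otimes N_{b|y}$ over equal-Schmidt-weight blocks are both provable. Your treatment of the non-degenerate case also works (and is slightly overbuilt: once $|\phi_i\rangle$ is a simultaneous eigenvector of all $M_{a|x}\otimes\mathbf{1}$ and $\mathbf{1}\otimes N_{b|y}$, its conditional behavior is deterministic and hence local, with no need to show $|\phi_i\rangle$ is a product state). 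The problem is that the degenerate case, which you defer, is not a residual technicality --- it is the entire theorem. Any behavior realized on a maximally entangled $\rho_{AB}$ sits in a single fully degenerate block, and since Eve chooses the extension, every purification must be excluded, not just generic ones.

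The specific plan you give for that case --- rotating within a degenerate block to a common product eigenbasis of all the measurement families --- fails, because the restricted operators need not commute. Take $\rho_{AB}=\mathbf{1}/4$ with the standard CHSH observables and let Eve hold the purification $\frac{1}{2}\sum_{i,j}|ij\rangle_{AB}|ij\rangle_{E}$: by the transpose trick she predicts $(a,b)$ perfectly at every $(x,y)$ by measuring $(M_{a|x}\otimes N_{b|y})^{T}$, the single Schmidt block is the whole space, and the four joint measurements admit no common eigenbasis --- yet the behavior is local (a product distribution). So perfect prediction is fully compatible with non-commuting, non-jointly-diagonalizable measurements inside a degenerate block; what it is \emph{not} compatible with is nonlocality, and joint diagonalizability is the wrong invariant to chase. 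The property that perfect prediction actually delivers, and that closes the degenerate case, is non-disturbance: $\sum_{a,b}\big(\sqrt{M_{a|x}}\otimes\sqrt{N_{b|y}}\big)\,\rho_{AB}\,\big(\sqrt{M_{a|x}}\otimes\sqrt{N_{b|y}}\big)=\rho_{AB}$ for every $(x,y)$, which the paper proves via the gentle-measurement lemma (Proposition~\ref{prop:fixed-state}). One then measures all of Alice's settings and all of Bob's settings sequentially on the undisturbed state to obtain a joint probability distribution whose pairwise marginals reproduce $\textbf{p}$, so $\textbf{p}$ is local --- no commutativity and no case split on degeneracy is needed. You should replace the degenerate-block step with this argument; your Schmidt-block machinery then becomes unnecessary.
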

Thm. \ref{thm:DIRA-main} implies that there is no ``bound randomness" in quantum nonlocal correlations, that is any behavior that exhibits even a small amount of quantum nonlocality is useful in the task of device-independent randomness amplification. The statements also imply that the quantity $P^{(q)}_{g}(\textbf{p}, \mu_{XY})$ for uniformly random $\mu_{XY}$ (or indeed any complete-support $\mu_{XY}$) is a faithful measure of quantum nonlocality in contrast to the quantity $P^{(q)}_g(\textbf{p}, (x^*, y^*))$ for any fixed $(x^*, y^*)$. 
We defer the proof of Thm. \ref{thm:DIRA-main} to Appendix \ref{app:no-bound-rand} which follows from the techniques used by Kempe et al. in \cite{KKMTV08} to show the NP-hardness of deciding whether the quantum value of three-player games is $1$ or $\leq 1 - \kappa$ for small $\kappa > 0$. We also show explicitly in Appendix \ref{app:pgavg-nl-meas} how the average guessing probability acts as a faithful measure of quantum nonlocality - in addition to being faithful, it is monotonic under operations that do not increase nonlocality (wiring and classical communication prior to the inputs~\cite{REAM12,DVJI14}). 

\textit{Detection efficiency thresholds for DI randomness extraction.-} We have seen in the previous sections that for any Bell inequality (with any number of players, inputs and outputs), its violation is a necessary and sufficient condition for device-independent randomness certification from the system, albeit in some cases one has to consider the average guessing probability and tailor the structure of the randomness certification protocol accordingly. In this section, we extend these considerations to the detection efficiency thresholds necessary for the certification of randomness. In particular, in a recent Letter \cite{ZLH+25}, the specific $I_{4422}$ inequality (with $X = Y = \{1,2,3,4\}$ and $A = B = \{0,1\}$) was considered. This inequality has been highlighted for its ability to tolerate a detection efficiency as low as $61.8\%$~\cite{vertesi2010closing} in contrast to the Eberhard inequality which achieves the optimal detection efficiency threshold of $2/3$ in the $2$ input, $2$ output scenario. Given this low detection efficiency threshold for detecting nonlocality, this inequality is considered to be of high practical interest for experimental implementation of DI randomness expansion or amplification protocols. However, in \cite{ZLH+25}, it was shown by means of a numerical calculation that the fixed setting min-entropy $H_{\min}(A,B|x^*,y^*,E)$ falls to zero for any fixed $(x^*, y^*)$ for a detection efficiency below $90.6\%$ thereby calling into question the utility of the $I_{4422}$ in practical DI implementations.

In Appendix \ref{app:det-eff-I4422}, we show by means of an explicit calculation that a consideration of $H_{\min}(A,B|X,Y,E)$ (obtained by computing  $-\log_2 P^{(q)}_{g}(\textbf{p}, \mu_{XY})$ for $\mu_{XY}(x,y) = 1/4$ for $x = y \in \{1,2,3,4\}$) in fact allows to certify randomness for any detection efficiency above $61.8\%$. In particular, we show how to extract optimal randomness from the experimental violation of $I_{4422}$ in \cite{ZLH+25}, also remarking that the inequality should be considered an ideal candidate for amplification and even for expansion protocols, provided we tailor the structure of the protocol to be one without spot-checking.

\textit{Analytical computation of the average guessing probability.-} The determination of the guessing probability by numerical means is essentially a solved problem as we have seen. In contrast, analytical tight bounds on the guessing probability have only been computed in certain specific cases \cite{PAM+10, KW16, WBA18,WAP21}. 

Consider the most well-known case of the CHSH Bell test in which Alice measures binary observables $A_0, A_1 \in \{ \pm 1\}$ and Bob measures similarly binary observables $B_0, B_1 \in \{ \pm 1\}$ chosen with uniform probabilities in the test rounds. Suppose that Alice and Bob observe a value $I_{obs} \in [2, 2\sqrt{2}]$ for the CHSH Bell expression given as
$I_{CHSH} = \langle A_0 B_0 \rangle + \langle A_0 B_1 \rangle + \langle A_1 B_0 \rangle - \langle A_1 B_1 \rangle$, with the correlators defined as $\langle A_i B_j \rangle:= \sum _{a,b\in\{\pm 1\}} ab\cdot p_{AB|XY}(a,b|A_i,B_j).$

Suppose now that during the randomness generation rounds Alice chooses her inputs $A_0$ and $A_1$ with probabilities $p$ and $1-p$ respectively for some $p \in \left[ \frac{1}{2}, 1 \right]$. We are then interested in the analytical form of the average guessing probability $P_g^{(q, A)}(I_{obs}, \{p,1-p\})$ of Alice's measurement outputs by a quantum adversary Eve. This local average guessing probability is defined as
\begin{eqnarray}
\label{eq:gp-al-Bell}
 &&P^{(q, A)}_{g}(I_{obs}, \{p,1-p\}) \nonumber \\&=& \max_{\{p_{ABE|XYZ}\}} \sum_{a= \pm 1} p \cdot p_{AE|XZ}(a, e=a|A_0, z=0) \nonumber \\
 &&\qquad \qquad \quad + (1-p) \cdot p_{AE|XZ}(a, e=a|A_1, z= 1) \nonumber \\ 
   && \qquad \qquad \;\; I_{CHSH}\left(p_{AB|XY} \right) = I_{obs} \; \; \; \; \nonumber \\
   && \; \qquad \qquad \;  \{ p_{ABE|XYZ} \} \in \mathcal{Q}^{trip}.
\end{eqnarray}
Here, $p_{AE|XZ}$ and $p_{AB|XY}$ are marginals of the tripartite quantum behavior $p_{ABE|XYZ}$.

Famously in \cite{PAM+10}, a tight bound on the fixed setting guessing probability for an observed value $I_{obs}$ of the CHSH Bell expression was derived as 
\begin{equation}
\label{eq:CHSH-guessprob}
    P^{(q, A)}_{g}(I_{obs}, \{1,0\}) = \frac{1}{2} + \frac{1}{2} \sqrt{2 - \frac{I_{obs}^2}{4}}.
\end{equation}
In the Appendix \ref{app:Pgavg-CHSH}, we generalise this well-known bound to the scenario where Alice chooses inputs with probabilities $p, 1-p$ in the generation rounds and obtain the following tight bound
\begin{equation}
    P^{(q,A)}_{g}(I_{obs},\{p,1-p\})
    = \frac12 + \frac{\sqrt{1-2p+2p^2}}{2} \, f_{\phi_p}(I_{obs}),
\end{equation}
where $\phi_p := 2\arctan\!\left(\frac{1-p}{p}\right)$ and
$f_{\phi}(I_{obs}):=\frac{\sqrt{\frac{\left(4-\alpha_{\min,\phi}^2\right)\left(32-2\alpha_{\min,\phi}^4\cos^2 \phi\right)}{16-8 \alpha_{\min,\phi}^2+\alpha_{\min,\phi}^4\cos^2 \phi}}-I_{obs}}{\alpha_{\min,\phi}}$. Here $\alpha_{\min,\phi}$ is the real root in $[0,\alpha_\phi]$ of
$h_{\phi}(t) = I_{obs}$, with $h_{\phi}(t) :=
\frac{4\sqrt{2}\,[32(8-8t^2+t^4)+2t^4\cos^2\phi(32-8t^2+t^4)-t^8\cos^4\phi]}
{\sqrt{(t^2-4)(t^4\cos^2\phi-16)}\,(16-8t^2+t^4\cos^2\phi)^{3/2}}$, and $\alpha_{\phi}$ the real solution in $[0,2]$ of $\frac{\alpha^2 \sin \phi (96-16\alpha^2-2\alpha^4\cos^2\phi)}
{(4-\alpha^2)(32-16\alpha^2+2\alpha^4\cos^2\phi)} = 1$.

Specifically, in the limiting case when $p=1$ we have that $\phi_{p=1} = 0$. Then for any $I_{obs}\in[2,2\sqrt{2}]$, we calculate
\begin{equation}
    \begin{split}
        \alpha_{\min, \phi_{p=1}}=\frac{2\sqrt{8-I_{obs}^2}}{I_{obs}}, \qquad I_Q^{(\alpha_{\min,\phi_{p=1}},\phi_{p=1})}=\frac{8}{I_{obs}}.
    \end{split}
\end{equation}
We thus obtain the function $f_{\phi_{p=1}}$ as 
\begin{eqnarray}
    f_{\phi_{p=1}}=\frac{\sqrt{8-I_{obs}^2}}{2},\; \; 
\end{eqnarray}
and the corresponding guessing probability in the limiting case of $p=1$ as \eqref{eq:CHSH-guessprob}.

In Appendix \ref{app:DIRA-imp}, we show how a consideration of the average guessing probability for a measurement-dependent locality (MDL) inequality based on the Hardy paradox \cite{RHAP+18} leads to an improvement in the generation rates of state-of-art DI randomness amplification protocols of Santha-Vazirani sources against quantum adversaries \cite{KAF20}. Specifically, in this Appendix we compute a lower bound on the conditional von Neumann entropy $H(A|X,E)$ of Alice's measurement outcomes conditioned on Eve's quantum side information under the constraint that Alice's input is chosen with an $\epsilon$-SV source, i.e., that $\frac{1}{2} - \epsilon \leq \mu_X(x) \leq \frac{1}{2} + \epsilon$. This quantity naturally appears in randomness amplification protocols, yet all the DIRA protocols thus far still made use of the fixed-setting guessing probability. By explicit computation of this average conditional von Neumann entropy via the SDP hierarchy introduced by Brown, Fawzi and Fawzi~\cite{BFF24}, we demonstrate an improvement in the generation rates for different $\epsilon$ and noise in the protocol. Finally, in Appendix \ref{app:Pgavg-simplex}, we provide analytical computations for an optimally deigned Bell expression which provides a larger gap and highglights the difference between considerations of the fixed-setting and the average guessing probability.

\textit{Conclusions and Open Questions.-} In this paper, we have shown by a careful consideration of different device-independent protocol structures that quantum nonlocality is a necessary and sufficient resource for DI randomness against quantum adversaries in protocol structures corresponding to randomness amplification. This complements previous findings that even maximum quantum nonlocality is not always sufficient for protocol structures corresponding to randomness expansion. This explicitly clarifies the relationship of the resource of quantum nonlocality vis-a-vis the resource of DI randomness.  Our considerations have also found a practical application in an explicit improvement in the generation rates of state-of-art randomness amplification protocols, as well as an explicit analytical computation of the average min-entropy in the CHSH Bell test which provides potential improvements in state-of-art expansion protocols. As a further consequence, we have shown that incorporating the average guessing probability provides an improvement in the detection efficiency tolerance in randomness extraction from inequalities such as $I_{4422}$.

Several open questions remain. A significant one is developing DI randomness expansion protocols beyond the spot-checking structure with improvements accrued by incorporating the average guessing probability over the fixed-setting case. Another question is to identify extremal quantum correlations that certify high average min-entropy for use in such protocols, in comparison to a number of recent findings where extremal correlations have found that certify $\log_2 d$ bits of randomness in fixed settings for local dimension $d$. Finally, it would also be interesting to investigate a resource theory of DI randomness and key and their relationship with the resource theory of quantum nonlocality. 


Additional references~\cite{ACP+16, MS17, MS17-2, Win99, NPA1, NPA2, FBL+21, mikos2023extremal, SV84, Hardy93, Helstrom69} are cited in the Appendix.

\textit{Acknowledgments.-} We acknowledge support from the General Research Fund (GRF) grant No. 17307925 and the Research Impact Fund (RIF) No. R7035-21.

\bibliographystyle{apsrev4-2}
\bibliography{common}


\onecolumngrid
\appendix
\newpage

\section{Device-Independent Guessing Probability} 
\label{sec:DIGP}
The guessing probability $P_g$ quantifies the extent to which an eavesdropper Eve is able to guess the outputs of Alice and Bob. In other words, the randomness of Alice and Bob's device given the side information possessed by Eve is quantified by the min-entropy defined as $H_{\min} = - \log_2 P_g$. Our primary focus in this paper is on the relationship between quantum nonlocality and the device-independent randomness. 

\subsection{Quantum adversaries} 
\subsubsection{Guessing Probability for a fixed setting}
Let us first describe the traditionally defined guessing probability for a quantum adversary (who may hold a device that shares arbitrary quantum correlations with Alice and Bob's device). In this case, the joint state of the devices is an arbitrary quantum state $\rho_{\mathcal{A}\mathcal{B}\mathcal{E}}$. After Alice and Bob have performed specific measurements $x^*, y^*$ and obtained classical outcomes $a, b$ with probabilities $p_{AB|XY}(a,b|x^*,y^*)$, the correlations are described by the classical-quantum state $\sum_{a,b} p_{AB|XY}(a,b|x^*, y^*)|a,b \rangle \langle a, b| \otimes \rho_{\mathcal{E}}^{a,b,x^*,y^*}$, where $\rho_{\mathcal{E}}^{a,b,x^*,y^*}$ is the state of Eve's device given that Alice and Bob input $x^*, y^*$ and obtained outputs $a, b$ respectively. Eve now performs a measurement described by a $d^2$-element POVM $\{O_{e|z}\}$ in an attempt to guess Alice and Bob's outcome $(a,b)$. Eve outputs $e$ as her best guess of $(a,b)$ which happens with probability $p_{E|ZABXY}(e = (a,b)|z, a,b,x^*, y^*) = \text{Tr}\left(O_{e|z} \rho_{\mathcal{E}}^{a,b,x^*, y^*} \right)$. We obtain that Eve's guessing probability for the fixed setting $x^*, y^*$ is given by
\begin{eqnarray}
    &&\max_{\{O_{e|z}\}} \sum_{a,b} p_{AB|XY}(a,b|x^*, y^*)p_{E|ZABXY}(e=(a,b)|z,a,b,x^*, y^*) \nonumber \\
    &=& \max_{\{O_{e|z}\}} \sum_{a,b} p_{E|Z}(e=(a,b)|z)p_{AB|XYEZ}(a,b|x^*,y^*,e=(a,b),z),
\end{eqnarray}
where we have used Bayesian rewriting. In other words, Eve performs a convex decomposition of Alice-Bob's behavior $\textbf{p} = \{ p_{AB|XY}\}$ into conditional behaviors $\textbf{p}^{a,b} = \{p_{AB|XYEZ}(a,b|x^*,y^*,e=(a,b),z) \}$ where $p_{E|Z}(e=(a,b)|z)$ define the weights in the convex decomposition.

We arrive at the usual definition of the device-independent guessing probability $P^{(q)}_g$ for a quantum adversary for a fixed setting pair $(x^*, y^*)$ when Alice and Bob observe a behavior $\textbf{p}$ as
\begin{eqnarray}
\label{eq:gpi}
   P^{(q)}_g(\textbf{p}, (x^*, y^*)) &=& \max_{\{\textbf{p}^{a,b}\}, \{q_{a,b}\}} \sum_{a,b} q_{a,b} {p}^{a,b}_{AB|XY}(a,b|x^*,y^*) \nonumber \\
    &&\text{s.t.} \;\;\; \sum_{a,b} q_{a,b} {p}^{a,b}_{AB|XY}(a',b'|x',y') = {p}_{AB|XY}(a',b'|x',y') \; \; \; \; \forall a',b',x',y' \nonumber \\
    && \;\;\qquad \textbf{p}^{a,b} \in \mathcal{Q}^{bip}, \; \; \; q_{a,b} \geq 0 \; \; \forall a, b, \;\; \;  \sum_{a,b} q_{a,b} = 1.
\end{eqnarray}
It is traditional to consider a semidefinite programming relaxation of the convex set $\mathcal{Q}^{bip}$ using the well-known NPA hierarchy \cite{NPA1, NPA2}. We also note that the constraint on reproducing the observed behavior $\textbf{p}$ is in some cases replaced by the constraint of an observed value $\omega_{obs}$ in some nonlocal game characterized by an input distribution $\nu_{XY}(x',y')$ and a winning predicate $V(a',b',x',y') \in \{0,1\}$ (an observed violation of a chosen Bell inequality). That is, we replace the constraint of reproducing the observed behavior $\textbf{p}$ in \eqref{eq:gpi} by the constraint
\begin{equation}
    \sum_{x',y',a',b'} \nu_{XY}(x',y') V(a',b',x',y') \sum_{a,b} q_{a,b} {p}^{a,b}_{AB|XY}(a',b'|x',y') = \omega_{obs},
\end{equation}
to define the variant of the guessing probability denoted $P_g(\omega_{obs}, (x^*,y^*))$. 

As a further simplification, the guessing probability expression is sometimes written in terms of a family of non-normalised behaviors $\widetilde{\textbf{p}}^{a,b,x^*,y^*}$ defined as
\begin{equation}
    \widetilde{\textbf{p}}^{a,b,x^*,y^*} = p_{E|Z}(e=(a,b)|z) \textbf{p}^{a,b,z},
\end{equation}
where 
\begin{equation}
    p^{a,b,z}(a',b'|x,y) = p_{AB|XYEZ}(a',b'|x,y,e=(a,b),z)
\end{equation}
are conditional behaviors conditioned on the outcomes $e=(a,b)$ of Eve for her measurement $z$. We note by the property of the convex decomposition that $\sum_{a,b} \widetilde{p}^{a,b,}(a',b'|x',y') = p(a',b'|x',y')$ for all $a',b' \in \{1,\ldots, d\}$ and $x', y' \in \{1,\ldots, m\}$. Crucially, every set of non-normalised behaviors $\widetilde{\textbf{p}}^{a,b}$ belonging to the corresponding non-normalised bipartite quantum set $\widetilde{\mathcal{Q}}^{bip}$ satisfying $\sum_{a,b} \widetilde{\textbf{p}}^{a,b} = \textbf{p}$ can be interpreted as arising from some quantum realisation and a POVM measurement $\{O_{e|z}\}$ performed by Eve. Here $\widetilde{\mathcal{Q}}^{bip}$ denotes the set of non-normalised quantum behaviors, i.e., behaviors obtainable from measurements on non-normalised quantum states $\widetilde{\rho}_{\mathcal{A}\mathcal{B}}$ (i.e. with $\text{Tr}\left(\widetilde{\rho}_{\mathcal{A}\mathcal{B}} \right) \geq 0$). The device-independent guessing probability for a quantum adversary for a fixed setting pair $(x^*, y^*)$ when Alice and Bob observe a behavior $\textbf{p}$ then simplifies to
\begin{eqnarray}
\label{eq:gpi}
   P^{(q)}_g(\textbf{p}, (x^*, y^*)) &=& \max_{\widetilde{\textbf{p}}^{a,b}} \sum_{a,b} \widetilde{p}^{a,b}_{AB|XY}(a,b|x^*,y^*) \nonumber \\
    &&\text{s.t.} \;\;\; \sum_{a,b} \widetilde{p}^{a,b}_{AB|XY}(a',b'|x',y') = p_{AB|XY}(a',b'|x',y') \; \; \; \; \forall a',b',x',y' \nonumber \\
    && \;\;\qquad \widetilde{\textbf{p}}^{a,b} \in \widetilde{\mathcal{Q}}^{bip}.
\end{eqnarray}
Again, the constraint can be similarly modified to reflect an observed value $\omega_{obs}$ in some nonlocal game. 

\subsubsection{Average Guessing Probability over all settings}
We now proceed to consider Eve's guessing probability of the measurement outcomes of Alice and Bob's device on average over all their measurement settings $(x,y)$ given that Eve has some classical side information regarding the input distribution $\mu_{XY}(x,y)$. Specifically, here we consider the scenario where even in the randomness generation rounds, Alice and Bob choose their inputs randomly with probabilities given by $\mu_{XY}$. We denote the corresponding guessing probability as $P_g^{(q)}(\textbf{p}, \mu_{XY})$ with a corresponding average min-entropy given by $- \log_2 \left(P_g^{(q)}(\textbf{p}, \mu_{XY}) \right)$.

As before, for each measurement setting $(x,y)$, the correlations are described by a classical-quantum state  $\sum_{a,b} p_{AB|XY}(a,b|x, y)|a,b \rangle \langle a, b| \otimes \rho_{\mathcal{E}}^{a,b,x,y}$ and Eve performs a measurement $\{O_{e|z}\}$ after coming to know the inputs $(x,y)$. In other words, Eve chooses an optimal measurement $z = (x,y)$ for each input $(x,y)$ and therefore performs $|X||Y| = m^2$ convex decompositions of Alice-Bob's behavior $\textbf{p}$. 

In this case, we obtain the average device-independent guessing probability for a quantum adversary over all settings chosen with probabilities $\mu_{XY}$ when Alice and Bob observe a behavior $\textbf{p}$ as
\begin{eqnarray}
\label{eq:gpii}
 P^{(q)}_{g}(\textbf{p}, \mu_{XY}) &=& \max_{\{p_{ABE|XYZ}\}} \sum_{x,y} \mu_{XY}(x,y) \sum_{a,b} p_{ABE|XYZ}(a,b,e=(a,b)|x,y,z=(x,y)) \nonumber \\ 
   && \; \; \; \;\;\; \text{s.t.} \quad \sum_{e'} p_{ABE|XYZ}(a',b',e'|x',y',z') = {p}_{AB|XY}(a',b'|x',y') \; \; \; \; \forall a',b',x',y',z' \nonumber \\
   && \qquad \qquad \;  \{p_{ABE|XYZ} \}\in \mathcal{Q}^{trip},
\end{eqnarray}
Observe that in this case, it is necessary to consider the set of tripartite quantum correlations $\mathcal{Q}^{trip}$ (where Eve has $|X||Y|$ inputs and $|A||B|$ outputs per input) which again can be relaxed to a semidefinite programming outer approximation using the NPA hierarchy. Also similarly to the case of a fixed setting, we might instead consider the constraint of an observed value $\omega_{obs}$ in some nonlocal game characterized by an input distribution $\nu_{XY}(x',y')$ and a winning predicate $V(a',b',x',y') \in \{0,1\}$ 
\begin{equation}
    \sum_{x',y',a',b',e'} \nu_{XY}(x',y') V(a',b',x',y') p_{ABE|XYZ}(a',b',e'|x',y',z') = \omega_{obs} \; \; \; \; \forall z'
\end{equation}
to define the variant of the guessing probability denoted $P_g(\omega_{obs}, \mu_{XY})$. 

To summarize, for a quantum adversary, in the case of a fixed setting pair $(x^*, y^*)$, we consider a single convex decomposition of the observed behavior $\textbf{p}$, while in the case of an average over all settings, Eve performs multiple ($|X||Y|$) convex decompositions of the observed behavior. We now note a crucial difference between these two scenarios for the quantum adversary which arises as a consequence of the results in \cite{RYS24} and Thm. \ref{thm:no-bound-rand} in Appendix \ref{app:no-bound-rand}. Namely, that there exist maximally nonlocal quantum behaviors that can lead to zero randomness in the fixed-setting scenario for any pair of settings $(x^*, y^*)$, and yet certify non-zero randomness in the average scenario where measurement settings are chosen randomly during the generation rounds. 


\begin{lem}
There exist maximally nonlocal quantum behaviors $\textbf{p} \in \mathcal{Q}^{bip}$ for which $P_g^{(q)}(\textbf{p}, (x^*, y^*)) = 1$ for any fixed $(x^*, y^*) \in |X| \times |Y|$, while at the same time  $P_g^{(q)}(\textbf{p}, \mu_{XY}) < 1$ for any input distribution such that $\mu_{XY}(x,y) > 0$ for all $x, y$. 
\label{lem:1}
\end{lem}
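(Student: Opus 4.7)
The plan is to combine two results that are already in place: the theorem quoted from \cite{RYS24} and Thm.~\ref{thm:DIRA-main}. The former supplies the ``bad'' half of the statement and the latter the ``good'' half, so no new construction or independent calculation is required.

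For the first half, I would pick a behavior $\textbf{p} \in \mathcal{Q}^{bip}$ supplied by the \cite{RYS24} theorem: by hypothesis this $\textbf{p}$ is maximally nonlocal and satisfies $P_g^{(q)}(\textbf{p},(x^*,y^*)) = 1$ for every fixed pair $(x^*,y^*)$, which is exactly the first clause of the lemma. For the second half, I would apply Thm.~\ref{thm:DIRA-main} to the very same $\textbf{p}$. Since $\textbf{p}$ is maximally nonlocal it is in particular nonlocal, so the only remaining point is to verify that a general full-support $\mu_{XY}$ lies inside the envelope of some $\epsilon$-SV source with $\epsilon \in [0,1/2)$. Setting $\mu_{\min} := \min_{x,y}\mu_{XY}(x,y)>0$ and $\mu_{\max} := \max_{x,y}\mu_{XY}(x,y)<1$ (the strict upper bound holds because all of the $|X||Y|\ge 4$ input pairs carry positive mass), I would pick $\epsilon$ satisfying $(1/2-\epsilon)^{\log_2 |X||Y|}\le \mu_{\min}$ and $\mu_{\max}\le(1/2+\epsilon)^{\log_2 |X||Y|}$; since the left-hand side tends to $0$ and the right-hand side to $1$ as $\epsilon \to 1/2^-$, an admissible $\epsilon$ always exists. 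Thm.~\ref{thm:DIRA-main} then delivers $P_g^{(q)}(\textbf{p},\mu_{XY})<1$, completing the argument.

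The only substantive obstacle sits inside the already-cited Thm.~\ref{thm:DIRA-main} (proven in Appendix~\ref{app:no-bound-rand}), which one must trust here. Conceptually, the content one needs is that $P_g^{(q)}(\textbf{p},\mu_{XY})=1$ for a full-support $\mu_{XY}$ forces Eve's optimal tripartite strategy to produce, at each input $z=(x,y)$, a quantum convex decomposition of $\textbf{p}$ whose components are deterministic at that specific input; reassembling the $|X||Y|$ such per-input decompositions into a single local hidden-variable account of $\textbf{p}$ is the nontrivial step, and this is where the soundness machinery of Kempe--Kobayashi--Matsumoto--Toner--Vidick \cite{KKMTV08} is invoked. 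Granted that theorem, the present lemma reduces to a one-line corollary obtained by applying it to the RYS24 behavior.
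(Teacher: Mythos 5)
Your proposal is correct and follows essentially the same route as the paper: the paper's proof likewise takes the behavior from \cite{RYS24} for the fixed-setting clause and invokes the no-bound-randomness result of Appendix~\ref{app:no-bound-rand} (Thm.~\ref{thm:no-bound-rand}, the appendix form of Thm.~\ref{thm:DIRA-main}) for the full-support clause. Your extra verification that any full-support $\mu_{XY}$ fits inside an $\epsilon$-SV envelope is a harmless addition needed only because you route through the SV-phrased main-text statement rather than the appendix theorem, which directly asserts that perfect guessing on every input pair forces locality.
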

\begin{proof}
In \cite{RYS24}, some of us identified an extreme case of a nonlocal behavior $\textbf{p}$ that exhibits maximum nonlocality in the sense of achieving maximum quantum value for a Bell expression, and yet has $P_g^{(q)}(\textbf{p}, (x^*, y^*)) = 1$ for any fixed input pair $(x^*, y^*)$. In contrast for the same quantum behavior $\textbf{p}$, following the proof of Thm. \ref{thm:no-bound-rand} we see that $P_g^{(q)}(\textbf{p}, \mu_{XY}) < 1$ for any input distribution such that $\mu_{XY}(x,y) > 0$ for all $x, y$.
\end{proof}

\subsection{Classical adversaries}
We now proceed to consider the case of classical adversaries, i.e., adversaries who can prepare arbitrary quantum devices for Alice and Bob but only share classical correlations with their devices (we say adversary's adversary's side information about Alice and Bob's devices is classical). In this case, the joint state of Alice, Bob and Eve's devices is described by a classical-quantum state $\sum_{\lambda} q_{\lambda} \rho^{(\lambda)}_{\mathcal{A}\mathcal{B}} \otimes |\lambda \rangle \langle \lambda|$. As in the case of a quantum adversary, Eve performs measurements described by a $d^2$-element POVM $\{O_{e|z}\}$ in an attempt to guess Alice and Bob's outcome pair $(a,b)$. 
In this case, the optimization is performed only over behaviors $p_{ABE|XYZ}$ of the form $\sum_{\lambda} q_{\lambda} p^{\lambda}_{AB|XY} \otimes p^{\lambda, \text{det}}_{E|Z} \in \mathcal{Q}^{trip}$, where $p^{\lambda, \text{det}}_{E|Z}$ denotes a behavior that returns deterministic outcomes for any $z$. In other words, in the case of a classical adversary, the behavior $p_{ABE|XYZ}$ is local across the cut $AB|E$, so that we only optimize over such behaviors instead of the entire set of tripartite quantum behaviors $\mathcal{Q}^{trip}$. 

The device-independent guessing probability for a classical adversary for a fixed setting pair $(x^*, y^*)$ is still given by the Guessing Probability expression in \eqref{eq:gpi}. That is, we consider a convex decomposition of the given behavior $\textbf{p}$ into bipartite quantum behaviors
\begin{eqnarray}
    p_{AB|XY}(a',b'|x',y') = \sum_{\lambda} q_{\lambda} p^{\lambda}_{AB|XY}(a',b'|x',y') \; \; \; \forall a',b',x',y',
\end{eqnarray}
that is optimized for the setting $(x^*, y^*)$ and the guessing probability expression is still given by
\eqref{eq:gpi}, but now with $\lambda$ taking the role of $a,b$ and with
\begin{eqnarray}
    \widetilde{p}^{a,b}_{AB|XY}(a',b'|x',y') = q_{a,b} p^{a,b}_{AB|XY}(a',b'|x',y'). 
\end{eqnarray}

The situation changes dramatically when we consider the scenario where Alice and Bob choose their inputs randomly with probabilities given by $\mu_{XY}$ during the randomness generation rounds. In this case, Eve is only able to perform a \textit{single} convex decomposition of $\textbf{p}$ with weights $q_{\lambda}$ with this decomposition now being optimized to guess the outcomes for all settings. We arrive at the guessing probability expression 
\begin{eqnarray}
    \label{eq:gpiii}
 P^{(c)}_{g}(\textbf{p}, \mu_{XY}) &=& \max_{\widetilde{\textbf{p}}^{a,b}} \sum_{x,y} \mu_{X,Y}(x,y) \sum_{a,b} \widetilde{p}^{a,b}(a_{x,y},b_{x,y}|x,y) \nonumber \\ 
   &&\text{s.t.} \;\;\; \sum_{a,b} \widetilde{p}^{a,b}_{AB|XY}(a',b'|x',y') = p_{AB|XY}(a',b'|x',y') \; \; \; \; \forall a',b',x',y' \nonumber \\
    && \;\;\qquad \widetilde{\textbf{p}}^{a,b} \in \widetilde{\mathcal{Q}}^{bip},
\end{eqnarray}
where now $a_{x,y}, b_{x,y} \in \{1,\ldots, d\}$ are specific outcomes chosen for each $(x,y)$ to maximize the guessing probability. This is the form of the guessing probability advocated in \cite{BSS14, NSPS14} to obtain more randomness from the same data, i.e., to use the entire dataset rather than the value of a single Bell expression and to obtain randomness from all settings rather than a fixed $(x^*, y^*)$. As in the case of a quantum adversary, we may also modify the optimization program to reflect the constraint of an observed value $\omega_{obs}$ in a suitably chosen nonlocal game. 

To summarize, for a classical adversary, in both cases of a fixed setting pair $(x^*, y^*)$ and of an average over all setting pairs, Eve performs a single convex decomposition of the observed behavior $\textbf{p}$. The case of a fixed setting pair results in the same guessing probability as for a quantum adversary, whereas the case of an average guessing probability shows a difference between classical and quantum adversaries. Furthermore, since the convex decomposition optimized for a fixed setting pair may be different than the one optimized for an average over setting pairs, in general it is again the case that
\begin{eqnarray}
    P_g^{(c)}(\textbf{p}, \mu_{XY}) \neq \sum_{x^*, y^*} \mu_{XY}(x^*, y^*) P_g^{(c)}(\textbf{p}, (x^*, y^*)).
\end{eqnarray}

\subsection{No-Signalling Adversaries}
In the case of an adversary who is only restricted by the no-signalling principle, the optimization sets are replaced by the corresponding no-signalling versions denoted by $\mathcal{NS}$. Again, consider the scenario where Alice and Bob choose their inputs randomly with probabilities given by $\mu_{XY}$ during the randomness generation rounds. In this case, we obtain  the average device-independent guessing probability for a no-signalling adversary $P^{(ns)}_{g}$ Eve over all settings chosen with probabilities $\mu_{XY}$ when Alice and Bob observe a behavior $\textbf{p}$ as 
\begin{eqnarray}
\label{eq:gpiv}
    P^{(ns)}_{g}(\textbf{p}, \mu_{XY}) &=& \max_{\{p_{ABE|XYZ}\}} \sum_{x,y} \mu_{XY}(x,y) \sum_{a,b} p_{ABE|XYZ}(a,b,e=(a,b)|x,y,z=(x,y)) \nonumber \\ 
   && \; \; \; \;\;\; \text{s.t.} \quad \sum_{e'} p_{ABE|XYZ}(a',b',e'|x',y',z') = p_{AB|XY}(a',b'|x',y') \; \; \; \; \forall a',b',x',y',z' \nonumber \\
   && \qquad \qquad \;  \{ p_{ABE|XYZ} \} \in \mathcal{NS}^{trip}.
\end{eqnarray}
And correspondingly, the guessing probability optimized for a fixed setting pair $(x^*, y^*)$ is given as
\begin{eqnarray}
    P^{(ns)}_{g}(\textbf{p}, (x^*, y^*)) &=& \max_{\{p_{ABE|XYZ}\}} \sum_{a,b} p_{ABE|XYZ}(a,b,e=(a,b)|x,y,z=(x^*,y^*)) \nonumber \\ 
   && \; \; \; \;\;\; \text{s.t.} \quad \sum_{e'} p_{ABE|XYZ}(a',b',e'|x',y',z=(x^*,y^*)) = \textbf{p}_{AB|XY}(a',b'|x',y') \; \; \; \; \forall a',b',x',y', \nonumber \\
   && \qquad \qquad \; \{ p_{ABE|XYZ} \} \in \mathcal{NS}^{trip}.
\end{eqnarray}
which can be rewritten as
\begin{eqnarray}
      P^{(ns)}_g(\textbf{p}, (x^*, y^*)) &=& \max_{\widetilde{\textbf{p}}^{a,b}} \sum_{a,b} \widetilde{p}^{a,b}_{AB|XY}(a,b|x^*,y^*) \nonumber \\
    &&\text{s.t.} \;\;\; \sum_{a,b} \widetilde{p}^{a,b}_{AB|XY}(a',b'|x',y') = p_{AB|XY}(a',b'|x',y') \; \; \; \; \forall a',b',x',y' \nonumber \\
    && \;\;\qquad \widetilde{\textbf{p}}^{a,b} \in \widetilde{\mathcal{NS}}^{bip}.
\end{eqnarray}
where $\widetilde{\mathcal{NS}}^{bip}$ denotes the set of non-normalized no-signalling behaviors and we have chosen 
\begin{eqnarray}
      \widetilde{p}^{a,b}_{AB|XY}(a',b'|x',y') = p_{E|Z}(e=(a,b)|z=(z^*,y^*)) p_{AB|XYEZ}(a',b'|x',y',e=(a,b),z=(x^*,y^*)). 
\end{eqnarray}
In other words, for a fixed measurement setting, we optimize over all convex decompositions of the observed behavior $\textbf{p}$ into no-signalling behaviors. 

To summarize, for a no-signalling adversary, in the case of a fixed setting pair $(x^*, y^*)$, we consider a single convex decomposition of the observed behavior $\textbf{p}$ but this time into no-signalling behaviors, while in the case of an average over all settings Eve performs multiple convex decompositions into no-signalling behaviors. Again, all convex decompositions are required to arise from a single tripartite no-signalling behavior. However, in this case in \cite{ACP+16} it was proven that the average guessing probability is precisely equal to the average over all settings of the fixed-setting guessing probability. Combined with the fact that there exist maximally nonlocal quantum behaviors that certify zero randomness against NS adversaries in the fixed-setting scenario for any pair of settings $(x^*, y^*)$, proves the existence of \textit{bound randomness} against the no-signalling adversary.

\begin{lem}[\cite{ACP+16}]
Let $\textbf{p}$ denote a bipartite no-signaling behavior and let $\mu_{XY}$ denote a probability distribution for the inputs. Then it holds that
\begin{eqnarray}
     P^{(ns)}_{g}(\textbf{p}, \mu_{XY}) = \sum_{x^*,y^*} \mu_{XY}(x^*,y^*) P^{(ns)}_g(\textbf{p}, (x^*, y^*)).
\end{eqnarray}
\label{lem:2}
\end{lem}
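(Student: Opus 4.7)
The plan is to establish the equality by proving both inequalities via direct construction, exploiting the fact that the tripartite no-signalling set is characterized purely by marginal conditions.

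First I would prove the easy direction $P^{(ns)}_{g}(\textbf{p}, \mu_{XY}) \leq \sum_{x^*,y^*} \mu_{XY}(x^*,y^*) P^{(ns)}_g(\textbf{p}, (x^*, y^*))$ by restriction. Given any feasible tripartite NS behavior $p_{ABE|XYZ}$ in the optimization defining $P^{(ns)}_{g}(\textbf{p}, \mu_{XY})$, for each fixed $z=(x^*,y^*)$ define the non-normalised behaviors
\begin{equation*}
\widetilde{p}^{a,b}_{AB|XY}(a',b'|x',y') := p_{ABE|XYZ}\bigl(a',b',e=(a,b)\,\big|\,x',y',z=(x^*,y^*)\bigr).
\end{equation*}
The marginal conditions inherited from tripartite NS make each $\widetilde{\textbf{p}}^{a,b}\in \widetilde{\mathcal{NS}}^{bip}$, and $\sum_{a,b}\widetilde{p}^{a,b}=\textbf{p}$ by the constraint. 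So $\{\widetilde{\textbf{p}}^{a,b}\}$ is feasible for the fixed-setting program at $(x^*,y^*)$, giving $\sum_{a,b}\widetilde{p}^{a,b}(a,b|x^*,y^*) \leq P^{(ns)}_g(\textbf{p},(x^*,y^*))$. Averaging over $(x^*,y^*)$ with weights $\mu_{XY}$ and taking the supremum on the left yields the inequality.

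The other direction $P^{(ns)}_{g}(\textbf{p}, \mu_{XY}) \geq \sum_{x^*,y^*} \mu_{XY}(x^*,y^*) P^{(ns)}_g(\textbf{p}, (x^*, y^*))$ is where the essential content lies. For each $(x^*,y^*)$ pick an optimal family $\{\widetilde{\textbf{p}}^{a,b,(x^*,y^*)}\}\subset\widetilde{\mathcal{NS}}^{bip}$ for the fixed-setting program, and paste them into a single tripartite object
\begin{equation*}
p_{ABE|XYZ}(a',b',e\,|\,x',y',z=(x^*,y^*)) := \widetilde{p}^{\,e,(x^*,y^*)}_{AB|XY}(a',b'|x',y'),
\end{equation*}
with $e\in A\times B$. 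I would verify the three NS conditions separately: NS from $A$ to $BE$ and from $B$ to $AE$ follow because each $\widetilde{\textbf{p}}^{a,b,(x^*,y^*)}$ is itself a bipartite NS behavior, and NS from $E$ to $AB$ follows from $\sum_e\widetilde{p}^{e,(x^*,y^*)}=\textbf{p}$, a quantity which is independent of $z=(x^*,y^*)$. Normalisation (summing also over $e$) follows since $\textbf{p}$ is normalised. Hence this pasted object is feasible for $P^{(ns)}_g(\textbf{p},\mu_{XY})$ and its objective equals exactly the right-hand side.

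Combining both directions gives equality. The main subtlety is the NS verification in the pasting step: one must check that the independently chosen fixed-setting decompositions, which are optimised for different $(x^*,y^*)$ and a priori unrelated, can nonetheless be concatenated because the only inter-$z$ constraint is $\sum_e p_{ABE|XYZ}(\cdot|\cdot,z)=\textbf{p}$, which holds automatically. This is exactly where the quantum case breaks down: an analogous construction would require a single state $\rho_{\mathcal{ABE}}$ and POVMs $\{O_{e|z}\}$ realising all fixed-setting decompositions simultaneously, and no such global quantum realisation need exist even when the individual NS-style marginal conditions are satisfied. It is this structural rigidity of $\mathcal{Q}^{trip}$ relative to $\mathcal{NS}^{trip}$ that, combined with Lemma~\ref{lem:1}, yields the bound randomness phenomenon against NS adversaries that is absent against quantum adversaries.
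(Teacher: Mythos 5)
Your proof is correct. Note that the paper itself gives no proof of this lemma --- it is imported verbatim from \cite{ACP+16} --- so there is no in-paper argument to compare against; your two-sided argument (restriction for $\leq$, pasting of independently optimal fixed-setting decompositions for $\geq$) is exactly the reasoning behind the cited result. The only point worth making fully explicit in the pasting step is the no-signalling condition from $AB$ to $E$, i.e.\ that $p_{E|Z}(e|z)=\sum_{a',b'}\widetilde{p}^{\,e,(x^*,y^*)}_{AB|XY}(a',b'|x',y')$ is independent of $(x',y')$; this holds because membership in $\widetilde{\mathcal{NS}}^{bip}$ means each $\widetilde{\textbf{p}}^{\,e,(x^*,y^*)}$ is a nonnegative multiple of a normalized no-signalling behavior, so its total weight is input-independent. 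Your closing remark correctly identifies why the construction fails for $\mathcal{Q}^{trip}$ --- the $z$-indexed decompositions would all have to arise from a single state and commuting-with-nothing-in-particular POVMs $\{O_{e|z}\}$ --- which is precisely the structural point the paper exploits in Appendix~\ref{app:no-bound-rand}.
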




\section{No Bound Randomness against Quantum Adversaries}
\label{app:no-bound-rand}
In the previous appendix, we have seen the existence of bound randomness against no-signalling adversaries, while in Lemma \ref{lem:1} we saw that certain candidate maximally nonlocal quantum behaviors that certify zero randomness against quantum adversaries in the fixed-setting scenario, still do not exhibit bound randomness when the settings are chosen with a full-support distribution $\mu_{XY}$. In this section, we prove that in fact, all nonlocal quantum behaviors certify randomness in this average scenario, i.e., there does not exist any bound randomness against quantum adversaries.

Recall that the value of a general two-player nonlocal game $G$ (aka two-party Bell expression) is written as
\begin{equation}
\omega(G) := \sum_{x,y,a,b} \nu_{XY}(x,y) V(a,b,x,y) P_{AB|XY}(a,b|x,y),
\end{equation}
where $x \in \mathsf{X}, y \in\mathsf{Y}$ denote the inputs of the two players Alice and Bob respectively, while $a \in \mathsf{A}, b \in \mathsf{B}$ denote their respective outputs. The probability distribution with which the inputs are chosen 
is denoted by $\nu_{XY}$ with $0 \leq \nu_{XY}(x,y) \leq 1$ for any $x,y$ - when the distribution is of product form $\nu^{(A)}_{X} \cdot \nu^{(B)}_{Y}$, the game is said to be free.

\begin{defi}
We say that a nonlocal quantum behavior $P = \{P_{AB|XY}(a,b|x,y)\} \in \mathcal{Q}$ is quantum predictable or exhibits bound randomness against a quantum adversary, if there is a strategy involving quantum side information that allows a third player Eve, when given the pair of inputs $(x,y)$ chosen by Alice and Bob, to perfectly guess their outputs $(a,b)$, for any $(x,y) \in \mathsf{X} \times \mathsf{Y}$. Analogously, if Eve is able to predict Alice's (Bob's) output $a$ ($b$) when given their chosen input $x$ ($y$) for any $x \in \mathsf{X}$ ($y \in \mathsf{Y}$), we say that $P$ exhibits bound local randomness against a quantum adversaries.
\end{defi}

In this Appendix, we aim to show that no nonlocal quantum behavior exhibits bound randomness against a quantum adversary \cite{MS17, MS17-2}. Our proof will use the immunization technique from \cite{KKMTV08} where the method was used to show the NP-hardness of deciding whether the quantum value of three-player games is $1$ or $\leq 1 - \epsilon$ for small $\epsilon > 0$. We extend their approach which was restricted to Pseudo-telepathy games (games in which the quantum value is unity, $\omega_{\mathcal{Q}} = 1$) to general nonlocal games to show that for any nonlocal game the observation of a super-classical winning probability by the honest parties implies that a quantum adversary cannot guess the outcomes perfectly. 

We remark that while the considerations in this Appendix will be restricted to bipartite behaviors for ease of notation, it will be apparent from the proof that the statement holds for general $n$-partite nonlocal quantum behaviors for any $n \geq 2$. Also note that the scenario considered in the definition is the usual scenario of the guessing probability in general public-source randomness amplification protocols as well as in certain randomness expansion protocols where multiple input pairs $(x,y)$ are used in the randomness generation rounds. As such, we aim to show that all  nonlocal games serve to certify randomness for such protocols. 
Finally, we note that this is in contrast to the situation of a no-signalling adversary where it is well known that there exist even extremal nonlocal no-signalling behaviors $P \in \mathcal{NS}$ that exhibit bound randomness, i.e., there is a strategy involving no-signalling side information that allows Eve to guess the outputs $(a,b)$ when given any pair of inputs $(x,y)$, even as Alice and Bob observe the maximum value $\omega_{\mathcal{NS}}(G)$ .

\begin{thm}
\label{thm:no-bound-rand}
No nonlocal quantum behavior exhibits bound randomness against a quantum adversary. 
\end{thm}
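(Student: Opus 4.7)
The plan is to argue by contraposition: I would assume the existence of a nonlocal $P\in\mathcal{Q}^{bip}$ that exhibits bound randomness against a quantum adversary and derive that $P$ must in fact be local. By the bound-randomness hypothesis there is a tripartite quantum realization $(|\Psi\rangle_{ABE},\{M_{a|x}\},\{N_{b|y}\},\{O_{e|z}\})$ yielding some $P_{ABE|XYZ}\in\mathcal{Q}^{trip}$ with marginal $P$ on $AB$, such that taking $z=(x,y)$ Eve's outcome $e=(a,b)$ coincides with Alice and Bob's output with probability one, for every pair $(x,y)\in X\times Y$.

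The first step is to extract the operator-level consequence of perfect guessing. From
$$\text{Tr}\bigl[(M_{a|x}\otimes N_{b|y}\otimes O_{e|(x,y)})|\Psi\rangle\langle\Psi|\bigr]=P(a,b|x,y)\,\delta_{e,(a,b)},$$
positivity forces $(M_{a|x}^{1/2}\otimes N_{b|y}^{1/2}\otimes O_{e|(x,y)}^{1/2})|\Psi\rangle=0$ whenever $e\neq(a,b)$, which standardly yields the identity $(M_{a|x}\otimes N_{b|y}\otimes I_E)|\Psi\rangle=(I_A\otimes I_B\otimes O_{(a,b)|(x,y)})|\Psi\rangle$ on the support of the state. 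Eve's measurement outcome therefore perfectly encodes Alice-Bob's output, and this holds simultaneously for every input pair. The second step is to invoke the immunization technique of Kempe et al.~\cite{KKMTV08}: in their original setting this shows that if a three-player quantum strategy has one player perfectly predicting the outputs of the other two, the strategy reduces, up to a local isometry on the predictor's side, to a convex combination of product strategies. Applied here, the simultaneous family of perfect-correlation identities derived above forces the POVMs $\{O_{(a,b)|(x,y)}\}$ corresponding to distinct input pairs to commute pairwise on the support of $\rho_E=\text{Tr}_{AB}|\Psi\rangle\langle\Psi|$. A joint refinement of these commuting POVMs produces a classical label $\lambda$ and a decomposition $|\Psi\rangle=\sum_\lambda\sqrt{q(\lambda)}\,|\psi_\lambda\rangle_{AB}|\lambda\rangle_E$.

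The final step is to read off a local hidden-variable model from this decomposition. Each branch $|\psi_\lambda\rangle_{AB}$ is an eigenvector with eigenvalue one of $M_{a_\lambda(x,y)|x}\otimes N_{b_\lambda(x,y)|y}$, so the outcomes on branch $\lambda$ are deterministic functions of $\lambda$ and the inputs. Because Alice's and Bob's measurements act on distinct tensor factors, the marginal $\text{Tr}_B[(M_{a|x}\otimes I_B)|\psi_\lambda\rangle\langle\psi_\lambda|]$ is independent of $y$, so $a_\lambda(x,y)=a_\lambda(x)$, and symmetrically $b_\lambda(x,y)=b_\lambda(y)$. This gives
$$P(a,b|x,y)=\sum_\lambda q(\lambda)\,\delta_{a,a_\lambda(x)}\,\delta_{b,b_\lambda(y)},$$
a local hidden-variable model, contradicting the nonlocality of $P$. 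The principal obstacle is the commutation step: \cite{KKMTV08} originally states the immunization lemma for pseudo-telepathy games and leans on the game's winning predicate, whereas here one must rederive commutativity directly from the family of operator identities in the first step for arbitrary, not necessarily projective, POVMs $\{O_{e|z}\}$, and establish it on the relevant support of $\rho_E$ rather than globally; a secondary delicate point is ensuring in the last step that the splitting $a_\lambda(x,y)=a_\lambda(x)$, $b_\lambda(x,y)=b_\lambda(y)$ is legitimately inherited by each branch $|\psi_\lambda\rangle_{AB}$ and not merely by the averaged state.
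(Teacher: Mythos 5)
Your overall architecture (contraposition; operator identities from perfect guessing; reduce to an explicit local hidden-variable model) is a reasonable and genuinely different route from the paper's, but it has a real gap at exactly the step you flag as ``the principal obstacle'', and that gap is not a technicality one can expect to patch by bookkeeping. The commutativity of Eve's POVM elements across distinct input pairs on the support of the state does \emph{not} follow from the identities of your first step. Using $(I\otimes I\otimes O_{(a,b)|(x,y)})\ket{\Psi}=(M_{a|x}\otimes N_{b|y}\otimes I)\ket{\Psi}$ and the fact that operators on $E$ commute with operators on $AB$, one computes
\begin{equation*}
\bigl[O_{(a,b)|(x,y)},\,O_{(a',b')|(x',y')}\bigr]\ket{\Psi}
=\bigl(M_{a|x}M_{a'|x'}\otimes N_{b|y}N_{b'|y'}-M_{a'|x'}M_{a|x}\otimes N_{b'|y'}N_{b|y}\bigr)\otimes I\,\ket{\Psi},
\end{equation*}
so the required commutativity is equivalent to Alice's (and Bob's) own measurement operators for different settings effectively commuting on the state. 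That is essentially the statement that the behavior admits a joint outcome distribution, i.e.\ the locality you are trying to prove --- the argument is circular at its central step. (A secondary issue: even granting pairwise commutativity of non-projective POVMs on a subspace, the ``joint refinement into a classical label $\lambda$'' and the orthogonal branch decomposition $\ket{\Psi}=\sum_\lambda\sqrt{q(\lambda)}\ket{\psi_\lambda}_{AB}\ket{\lambda}_E$ require a further argument; commuting POVMs admit a joint POVM but not automatically a projective refinement that splits the state into deterministic branches.)

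The paper closes exactly this hole by a different mechanism: from the perfect-guessing condition it proves, via the gentle measurement lemma and a trace-norm triangle inequality, that each of Alice--Bob's joint measurements leaves the reduced state $\rho^{AB}$ invariant (Proposition~\ref{prop:fixed-state}). It then applies all of Alice's measurements and all of Bob's measurements \emph{sequentially} to the undisturbed state to construct a joint probability distribution over all outcomes whose pair marginals reproduce the observed behavior; by Fine's theorem the behavior is then local, contradicting nonlocality. This sidesteps any analysis of Eve's operator algebra and is what lets the KKMTV immunization idea extend from pseudo-telepathy games to arbitrary nonlocal behaviors. If you want to salvage your route, you would need to first establish the non-disturbance property and derive commutativity (or the joint distribution directly) from it, at which point you would have reproduced the paper's argument; your last step (splitting $a_\lambda(x,y)=a_\lambda(x)$ on each branch from the tensor structure) is fine and not where the difficulty lies.
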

\begin{proof}
We will prove the statement by contradiction, namely that if Eve is able to guess Alice and Bob's outcomes perfectly, then the behavior is local. Consider a two-player nonlocal game $G$ with inputs $x \in \mathsf{X}, y \in \mathsf{Y}$ and outputs $a \in \mathsf{A}, b \in \mathsf{B}$ respectively, input distribution $\nu_{XY}$ and winning condition given by predicate $V(a,b,x,y)$. Suppose that Alice and Bob observe a value $\omega_{obs}(G) > \omega_{\mathcal{C}}(G)$ and at the same time Eve is able to perfectly guess Alice and Bob's outputs $a, b$ upon receiving their chosen inputs $x, y$ for any $x \in \mathsf{X}, y \in \mathsf{Y}$. That is, suppose there exists a tripartite quantum behaviour $\left\{P_{ABE|XYZ}(a,b,e|x,y,z) \right\}$ such that
\begin{eqnarray}
\label{eq:guessing-game}
\sum_{x,y,a,b} \nu_{XY}(x,y) V(a,b,x,y)  P_{AB|XY}(a,b|x,y) &=& \omega_{obs}(G), \nonumber \\
\sum_{ab} P_{ABE|XYZ}\left(a,b,e=(a,b) \big|x,y,z=(x,y) \right) &=& 1 \; \; \; \; \forall (x,y) \in \mathsf{X} \times \mathsf{Y}.
\end{eqnarray}
Here, $\{P_{AB|XY}(a,b|x,y)\}$ is the marginal behavior of the tripartite quantum behavior $\left\{P_{ABE|XYZ}(a,b,e|x,y,z) \right\}$ which corresponds to a quantum strategy involving some tripartite quantum state $\rho^{ABC}$ and sets of measurements $\{\Pi_{x}^{a}\}_x$, $\{ \Lambda_{y}^b\}_y$ and $\{ \Gamma_{z}^{c} \}_z$ for the three players respectively where the $\Pi_{x}^{a}$ (similarly $\Lambda_{y}^b$ and $\Gamma_{z}^{c}$) are POVM elements, i.e., $\Pi_{x}^{a} \geq 0 \; \forall a,x$ and $\sum_{a} \Pi_{x}^{a} = \mathbf{1} \; \forall x$. 


Now the second condition in \eqref{eq:guessing-game} says that
\begin{eqnarray}
\label{eq:state-fixed}
\sum_{a,b} \tr \left(\Pi_{x}^{a} \otimes \Lambda_{y}^{b} \otimes \Gamma_{(x,y)}^{(a,b)}  \rho^{ABC} \right) = 1 \quad \forall (x,y) \in \mathsf{X} \times \mathsf{Y}.
\end{eqnarray}
In satisfying this condition, we see that Alice and Bob's marginal state $\rho^{AB} = \tr_{E} |\Psi\rangle \langle \Psi|$ (with $E$ denoting the Hilbert space of Eve's system) is fixed by Eve's measurement $\Gamma_{z}$ for any $z = (x,y)$. That is, if Eve observes  outcome $(a,b)$ for her measurement $\{\Gamma_{(x,y)}\}$, then Alice and Bob also definitely observe outcomes $a$ and $b$ for their measurements $\{ \Pi_{x} \}$ and $\{\Lambda_{y} \}$ respectively. In other words, Alice and Bob's measurements do not disturb their marginal state $\rho^{AB}$. To elaborate, let $\Pi_{x} \otimes \Lambda_{y}$ denote the superoperator corresponding to Alice and Bob's measurement $(x,y)$, so that $\Pi_{x} \otimes \Lambda_y \left(\rho^{AB} \right) = \sum_{a,b} \left(\sqrt{\Pi_{x}^a} \otimes \sqrt{\Lambda_{y}^b} \right) \rho^{AB} \left( \sqrt{\Pi_{x}^a} \otimes \sqrt{\Lambda_{y}^b} \right)$ denotes the post-measurement state after performing $\{\Pi_{x}^{a} \otimes \Lambda_{y}^b\}$ on $\rho^{AB}$.

\begin{prop}
\label{prop:fixed-state}
It holds that
\begin{equation}
\Pi_{x} \otimes \Lambda_y \left(\rho^{AB} \right) = \rho^{AB}.
\end{equation}
\end{prop}
\begin{proof}
This can be seen from
\begin{equation}
\big\| \Pi_x \otimes \Lambda_y \left(\rho^{AB} \right) - \rho^{AB} \big\|_1 = 0,
\end{equation}
where $\| \cdot \|_1$ denotes the trace norm, which serves to prove the statement. Since $\Pi_x \otimes \Lambda_y \left(\rho^{AB} \right) = \tr_{E} \Pi_x \otimes \Lambda_y \otimes \mathbf{1} \left(\rho^{ABC} \right)$ and $\rho^{AB} = \tr_{E} \mathbf{1} \otimes \mathbf{1} \otimes \Gamma_{(x,y)} \left(\rho^{ABC} \right)$ we have that by the monotonicity of the trace distance under partial trace,
\begin{equation}
\big\| \Pi_x \otimes \Lambda_y \left(\rho^{AB} \right) - \rho^{AB} \big\|_1 \leq \big\| \Pi_x \otimes \Lambda_y \otimes \mathbf{1} \left(\rho^{ABC} \right) - \mathbf{1} \otimes \mathbf{1} \otimes \Gamma_{(x,y)} \left( \rho^{ABC} \right) \big\|_1.
\end{equation}
Using the triangle inequality we then obtain that
\begin{eqnarray}
\label{eq:tracenorm-bound}
\big\| \Pi_x \otimes \Lambda_y \left(\rho^{AB} \right) - \rho^{AB} \big\|_1 &\leq& \big\| \Pi_x \otimes \Lambda_y \otimes \mathbf{1} \left(\rho^{ABC} \right) - \Pi_x \otimes \Lambda_y \otimes \Gamma_{(x,y)} \left( \rho^{ABC} \right) \big\|_1  \nonumber \\
& +& \big\| \Pi_x \otimes \Lambda_y \otimes \Gamma_{(x,y)} \left( \rho^{ABC} \right)  - \mathbf{1} \otimes \mathbf{1} \otimes \Gamma_{(x,y)} \left( \rho^{ABC} \right) \big\|_1.
\end{eqnarray}
Now, by the gentle measurement lemma \cite{Win99}, we know that for any state $\sigma$ and any measurement operator $0 \leq A \leq \mathbf{1}$, it holds that 
\begin{equation}
\big\| \sigma - \sqrt{A} \sigma \sqrt{A} \|_1 \leq  3 \sqrt{1 - \tr(A \sigma)}.
\end{equation} 
We can bound each of the terms on the right in \eqref{eq:tracenorm-bound} using the above inequality as follows. For the first term, take $\sigma = \oplus_{a,b} \left(\sqrt{\Pi_{x}^{a}} \otimes \sqrt{\Lambda_{y}^b} \otimes \mathbf{1} \right) \rho^{ABC} \left(\sqrt{\Pi_{x}^{a}} \otimes \sqrt{\Lambda_{y}^b} \otimes \mathbf{1} \right)$ and $A = \oplus_{a,b} \mathbf{1} \otimes \mathbf{1} \otimes \Gamma_{(x,y)}^{(a,b)}$, we have that $\tr(A \sigma) = 1$ so that the term is identically $0
$. Similarly, for the second term, take $\sigma = \oplus_{a,b} \left(\mathbf{1} \otimes \mathbf{1} \otimes \sqrt{\Gamma_{(x,y)}^{(a,b)}} \right) \rho^{ABC} \left(\mathbf{1} \otimes \mathbf{1} \otimes \sqrt{\Gamma_{(x,y)}^{(a,b)}} \right)$ and $A = \oplus_{a,b} \Pi_{x}^a \otimes \Lambda_{y}^b \otimes \mathbf{1}$, we have that $\tr(A \sigma) = 1$ so that the term is also identically $0$.

Putting things together we obtain that 
\begin{equation}
\big\| \Pi_{x} \otimes \Lambda_{y} \left(\rho^{AB} \right) - \rho^{AB} \big\|_1 = 0 \implies \Pi_{x} \otimes \Lambda_{y} \left(\rho^{AB} \right) = \rho^{AB}.
\end{equation}
\end{proof}

We now construct a classical strategy for the game $G$ in terms of the following joint probability distribution
\begin{eqnarray}
&&P^{JPD}\left(a_1,a_2,\ldots, a_{|\mathsf{X}|}, b_1,b_2,\ldots, b_{|\mathsf{Y}|} \big| x_1, x_2,\ldots,x_{|\mathsf{X}|}, y_1, y_2, \ldots, y_{|\mathsf{Y}|} \right) \nonumber \\
&&= \tr \left( \left(\Pi_{x_{|\mathsf{X}|}}^{a_{|\mathsf{X}|}} \cdot \ldots \cdot \Pi_{x_1}^{a_1} \otimes \Lambda_{y_{|\mathsf{Y}|}}^{b_{|\mathsf{Y}|}} \cdot \ldots \cdot \Lambda_{y_1}^{b_1} \otimes \mathbf{1} \right) \rho^{ABC} \right),
\end{eqnarray}
where we have chosen an ordering of the inputs such that $\mu_{X}(x_1) \geq \mu_{X}(x_2) \geq \ldots \geq \mu_{X}(x_{|\mathsf{X}|})$ and similarly $\mu_{Y}(y_1) \geq \mu_{Y}(y_2) \geq \ldots \geq \mu_{Y}(y_{|\mathsf{Y}|})$. The marginals of $P^{JPD}$ give the classical probabilities $P^{JPD}(a_k, b_l|x_k, y_l)$. We now complete the proof by showing that these marginals in fact reproduce the quantum probabilities.

Define the states $\rho^{AB}(k,l)$ as
\begin{eqnarray}
\rho^{AB}(k,l) := \left( \Pi_{x_{k-1}} \circ \ldots \circ \Pi_{x_{1}} \right) \otimes \left( \Lambda_{y_{l-1}} \circ \ldots \circ \Lambda_{y_{1}} \right) \rho^{AB},
\end{eqnarray}
with $\rho^{AB}(1,1) = \rho^{AB}$. 
Evidently, we have by Proposition \ref{prop:fixed-state} that the local measurements do not disturb the state $\rho^{AB}$ so that
\begin{equation}
\rho^{AB}(k,l) = \rho^{AB} \quad \forall k, l.
\end{equation}
Also, we have that the quantum probabilities
\begin{eqnarray}
P_{AB|XY}(a_k,b_l|x_k, y_l) &=& \tr\left( \Pi_{x_k} \otimes \Lambda_{y_l} \rho^{AB} \right) \nonumber \\
&=& \tr\left( \Pi_{x_k} \otimes \Lambda_{y_l} \rho^{AB}(k,l) \right) = P^{JPD}(a_k, b_l|x_k, y_l),
\end{eqnarray} 
which are exactly the classical probabilities arising from the joint probability distribution. Since the set of quantum probabilities in the marginal behavior $\{P_{AB|XY}(a,b|x,y)\}$ of the tripartite quantum behavior $\{P_{ABE|XYZ}(a,b,e|x,y,z)\}$ admit a joint probability distribution, the value that they achieve for the game $G$ cannot be super-classical, i.e., $\omega_{obs}(G) \leq \omega_{\mathcal{C}}(G)$ in \eqref{eq:guessing-game}. This shows that any behavior $\{P_{AB|XY}(a,b|x,y)\}$ that allows for perfect guessing of Alice-Bob's outcomes by Eve in this scenario must necessarily be local, i.e., no nonlocal quantum behavior exhibits bound randomness.

\end{proof} 

A consequence of Thm. \ref{thm:no-bound-rand} is in DI randomness amplification protocols, where the inputs are chosen using an $\epsilon$-SV source, so that 
\begin{equation}
  \left(\frac{1}{2} - \epsilon\right)^{\log_2 | X||Y|} \leq \nu_{XY}(x,y) \leq \left(\frac{1}{2} + \epsilon\right)^{\log_2 |X||Y|}.
\end{equation}
Therefore, in a DI amplification protocol such as in \cite{FWE+23, KAF20, CR12, BRG+16, GMT+13, RBH+16}, all measurement runs are considered as both Bell test runs and randomness generation runs. Since all measurement inputs are used for randomness generation, we see by Thm. \ref{thm:no-bound-rand} that any quantum nonlocal behavior is a sufficient resource for DI randomness amplification. This is in contrast to recent findings regarding spot-checking protocols for DI randomness expansion \cite{RYS24} and protocols for DI quantum key distribution \cite{FBL+21}.


\section{Detection Efficiency thresholds for randomness extraction vis-{\`a}-vis nonlocality detection}
\label{app:det-eff-I4422}

We have seen in the previous sections that for any Bell inequality (with any number of players, inputs and outputs), its violation is both a necessary and sufficient condition for device-independent randomness certification from the system, albeit in some cases one has to consider the average guessing probability and tailor the structure of the randomness certification protocol accordingly. 

\begin{figure}[htbp]
    \centering
    \includegraphics[width=0.7\linewidth]{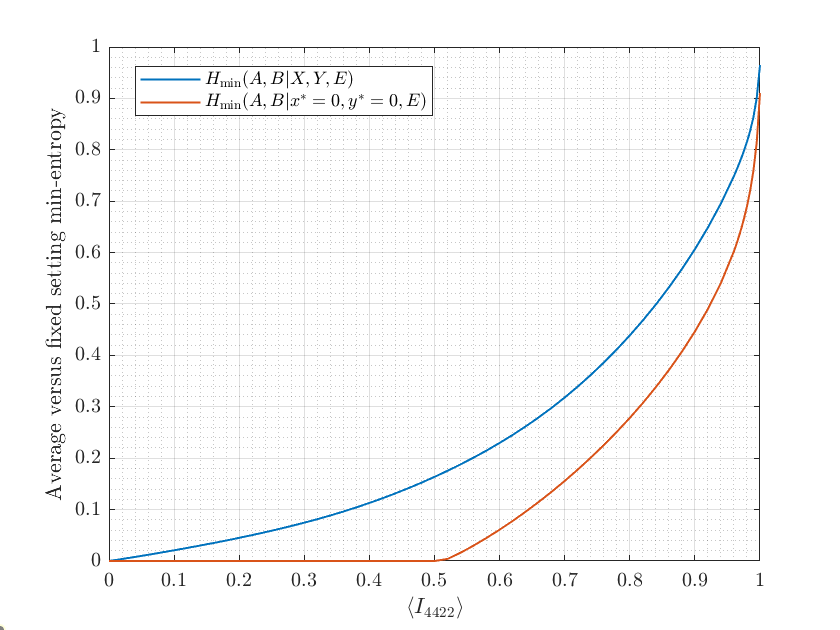}
    \caption{The horizontal axis represents the violation of the inequality $I_{4422}$, normalized such that its maximal quantum value equals $1$. The vertical axis shows the lower bound on certified randomness, quantified by the min-entropy - where we compare the average versus fixed-setting cases. Specifically, $H_{\min}(A,B|x^*=y^*=0,E)$ is computed for a fixed measurement setting $x^*=y^*=0$ (any fixed setting gives the same min-entropy for this inequality), while $H_{\min}(A,B|X,Y,E)$ is evaluated over four settings $x=y \in \{0,1,2,3\}$, each chosen with uniform probability $\mu_{XY}(x,y) = 1/4$. As can be seen, even though the $H_{\min}(A,B|x^*=y^*=0,E)$ drops to $0$ when $I_{4422} \in [0,0.499]$ as found in \cite{ZLH+25}, the average min-entropy for randomly chosen settings $x=y \in \{1,2,3,4\}$ is non-zero for any violation, i.e., $H_{\min}(A,B|X,Y,E) > 0$ for all $I_{4422} > 0$ for $\mu_{XY}(x,y) = 1/4$ for $x=y \in \{1,2,3,4\}$. 
    The numerical calculations are obtained using the NPA~\cite{NPA1,NPA2} hierarchy at level $2$ (all words of length two).}
    \label{fig:avg-Hmin-I4422}
\end{figure}

In this section, we extend these considerations to the detection efficiency thresholds necessary for the certification of randomness. In particular, in a recent Letter \cite{ZLH+25}, the specific $I_{4422}$ inequality (with $4$ inputs for each of Alice and Bob and $2$ outputs per input) was considered. This inequality has been highlighted for its ability to tolerate a detection efficiency as low as $61.8\%$~\cite{vertesi2010closing} in contrast to the Eberhard inequality which achieves the optimal detection efficiency threshold of $2/3$ in the $2$ input, $2$ output scenario. Given this low detection efficiency threshold for detecting nonlocality, this inequality might be of practical interest for experimental implementation in a DI randomness expansion or amplification protocol. However, in \cite{ZLH+25}, it was shown by means of a numerical calculation at the $1+AB+AAB$ level of the NPA hierarchy that the fixed setting min-entropy falls to zero for a detection efficiency below $90.6\%$ thereby rendering the $I_{4422}$ inequality much less desirable in practical randomness generation experiments. Explicitly, the $I_{4422}$ Bell inequality in the scenario where Alice and Bob measure binary observables $A_1, A_2, A_3, A_4$ and $B_1, B_2, B_3, B_4$ taking values in $\{0, 1\}$ is \cite{ZLH+25}
\begin{eqnarray}
    I_{4422} := I_{\text{CH}}^{(1,2;1,2)} + I_{\text{CH}}^{(3,4;3,4)} - I_{\text{CH}}^{(2,1;4,3)} - I_{\text{CH}}^{(4,3;2,1)} - p(A_2 = 1) - p(A_4 - 1) - p(B_2 = 1) - p(B_4 = 1) \leq 0,
\end{eqnarray}
where $I_{\text{CH}}^{(i,j;k,l)} = p(A_i = B_k = 1) + p(A_j = B_k = 1) + p(A_i = B_l = 1) - p(A_j = B_l = 1) - p(A_i = 1) - p(B_k = 1)$.

In \cite{ZLH+25}, it was found that when we consider the $I_{4422}$ inequality normalized such that its classical bound is $0$ and maximum quantum value is $1$, then the min-entropy $H_{\min}(A,B|x^*,y^*,E) = 0$ for $I_{4422} < 0.5$ for any fixed $(x^*,y^*) \in \{1,2,3,4\}^2$. That is, we have that the fixed-setting guessing probability $P^{(q)}_g(\textbf{p}, (x^*, y^*)) = 1$ when $I_{4422} < 0.5$ for any fixed $(x^*, y^*)$. The detection efficiency threshold to achieve $I_{4422} > 0$ in quantum theory is famously equal to $61.8\%$ however the detection efficiency threshold required to achieve $I_{4422} > 0.5$ was found to be $90.6\%$ thus calling into question the effectiveness of this inequality in DI randomness certification. 

In Fig. \ref{fig:avg-Hmin-I4422}, we compute the average min-entropy 
$H_{\min}(A,B|X,Y,E)$ averaged over uniformly random $x=y \in \{1,2,3,4\}$  (obtained by computing  $-\log_2 P^{(q)}_{g}(\textbf{p}, \mu_{XY})$ for $\mu_{XY}(x,y) = 1/4$ for $x = y \in \{1,2,3,4\}$) and show explicitly that the inequality certifies randomness on average for any violation, thus restoring the attractiveness of this inequality for use in DI randomness certification protocols.

In Fig. \ref{fig:avg-Hmin-detectionefficiency}, we reinforce the point by computing the $H_{\min}(A,B|X,Y,E)$ for $\mu_{XY}(x,y) = 1/4$ for any $x=y \in \{1,2,3,4\}$ as a function of the violation of $I_{4422}(\eta)$~\cite{vertesi2010closing} where the detection efficiency $\eta$ is explicitly incorporated into the inequality. Specifically, we consider the inequality~\cite{vertesi2010closing}
\begin{equation}
    I_{4422}(\eta) := I_{\text{CH}}^{(1,2;1,2)}(\eta) + I_{\text{CH}}^{(3,4;3,4)}(\eta) - I_{\text{CH}}^{(2,1;4,3)}(\eta) - I_{\text{CH}}^{(4,3;2,1)}(\eta) - \left[p(A_2 = 1) - p(A_4 - 1) - p(B_2 = 1) - p(B_4 = 1)\right]/\eta \leq 0,
\end{equation}
where
\begin{equation}
    I_{\text{CH}}^{(i,j;k,l)} (\eta) = p(A_i = B_k = 1) + p(A_j = B_k = 1) + p(A_i = B_l = 1) - p(A_j = B_l = 1) - \left[p(A_i = 1) + p(B_k = 1) \right]/\eta.
\end{equation}
As can be seen from the plot, the optimal strategies depend explicitly on the detection efficiency $\eta$ in question, but it is always the case that $H_{\min}(A,B|X,Y,E) > 0$ for any $\eta > 61.8\%$. 

In conclusion, we see that a consideration of the average guessing probability restores the desirability of a nonlocal test via the inequality $I_{4422}$ for DI randomness certification in practical experiments such as those in \cite{ZLH+25} where the detection efficiency thresholds play a crucial role. We also remark that the inequality $I_{4422}$ should also be considered a good candidate for DI randomness expansion protocols, provided we tailor the structure of the protocol to be one without spot-checking. For instance, one may modify the protocol to utilize a small amount of randomness to vary the input settings used in the generation rounds. One may also consider protocols when the input randomness is recycled or a protocol using a source of public randomness (such as NIST's randomness beacon) in which case the focus is on turning public randomness into private quantum-certified randomness \cite{BRC23}. 

\begin{figure}[htbp]
    \centering
    \includegraphics[width=0.7\linewidth]{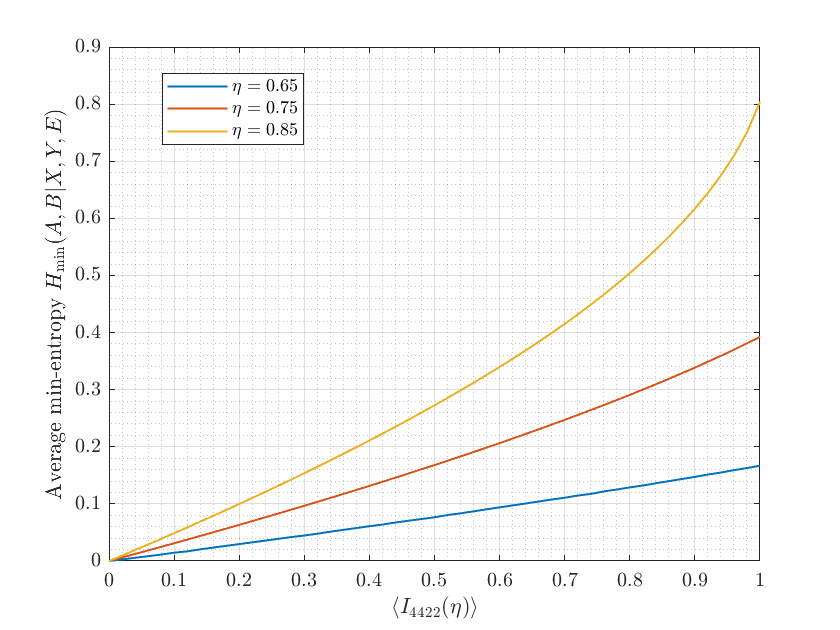}
    \caption{Plot of the average min-entropy $H_{\min}(A,B|X,Y,E)$ for $\mu_{X,Y}(x,y) = 1/4$ for any $x=y \in \{1,2,3,4\}$ as a function of the violation of $I_{4422}(\eta)$ where the detection efficiency $\eta$ is explicitly incorporated into the inequality. 
    The horizontal axis corresponds to the violation of the inequality $I_{4422}(\eta)$, again normalized to have a maximum quantum value of $1$. The vertical axis represents the min-entropy $H_{\min}(A,B|X,Y,E)$, where the inputs $x=y \in \{1,2,3,4\}$ are uniformly distributed. The optimal strategies depend specifically on the detection efficiency $\eta$ in question, but it is always the case that $H_{\min}(A,B|X,Y,E) > 0$ for any $\eta > 61.8\%$. }
    \label{fig:avg-Hmin-detectionefficiency}
\end{figure}


\section{Average Guessing Probability as a measure of Quantum Nonlocality}
\label{app:pgavg-nl-meas}

In this Appendix, we show that the average device–independent guessing probability against a quantum adversary constitutes a valid nonlocality measure. Specifically, the corresponding min–entropy is faithful (it vanishes exactly on local behaviors) and monotonic under single-copy WCCPI operations (wirings and classical communication performed prior to the inputs). Recall that WCCPI operations \cite{REAM12} consist of wirings and classical communication outside the measurement phase, and they serve to distinguish local correlations (which can be generated by WCCPI operations) from nonlocal correlations (which cannot).

A bipartite behavior is denoted by $\{p_{AB|XY}(a,b|x,y)\}$.  
Let $\mathcal{L}$ denote the local set, and let $\mathcal{Q}^{trip}$ ($\mathcal{NS}^{trip}$) denote the set of tripartite quantum (no-signalling) extensions compatible with $\{p_{AB|XY}\}$, as used to define device-independent (DI) guessing probabilities in App.~\ref{sec:DIGP}. For an input distribution $\mu_{XY}(x,y)$ on $X \times Y$, the average DI guessing probability against a quantum adversary is given by Eqs.~\eqref{eq:gpii}:
\begin{equation}
P^{(q)}_{\text{g}}(\mathbf p,\mu_{XY}) 
= \max_{\,\{p_{ABE|XYZ}\}\in\mathcal Q^{trip}}
\sum_{x,y}\mu_{XY}(x,y)\sum_{a,b} 
p_{ABE|XYZ}\bigl(a,b,e{=}(a,b)\,\big|\,x,y,z{=}(x,y)\bigr),
\label{eq:prev obj}
\end{equation}
subject to the constraint $\sum_e p_{ABE|XYZ}(a,b,e|x,y,z) = p_{AB|XY}(a,b|x,y)$. We quantify nonlocality via the min-entropy:
\begin{align}
\mathcal H_{\min}(\mathbf p, \mu_{XY}) 
= -\log_2 P^{(q)}_{\text{g}}(\mathbf p,\mu_{XY}).
\end{align}

We adopt the single-copy WCCPI operation framework $\mathcal W$ from \cite{DVJI14}. In a fixed measurement setting $(|X|,|Y|,|A|,|B|)$, Alice’s (Bob’s) input alphabet is 
$\{x_1, x_2, \ldots, x_{|X|}\}$ ($\{y_1, y_2, \ldots, y_{|Y|}\}$), and her (his) output alphabet is 
$\{a_1, a_2, \ldots, a_{|A|}\}$ ($\{b_1, b_2, \ldots, b_{|B|}\}$). Throughout, we restrict attention to single-copy manipulations of the resource, meaning that the parties cannot access multiple instances of the behavior jointly. The single-copy WCCPI operations that are given for free to both parties in the DIQIP scenario (device-independent quantum information processing) are mixtures of:
(i) relabelings $\mathcal R$ of inputs and outputs,
(ii) output coarse-grainings $\mathcal G$,
(iii) input substitutions $\mathcal K$, and
(iv) mixing with local behaviors.  

A nonlocality measure $\mathcal N$ must satisfy the following:  
\begin{itemize}
\item[(N1)] \emph{Faithfulness:} $\mathcal N(\mathbf p) = 0$ for all $\mathbf p \in \mathcal L$, and $\mathcal N(\mathbf p) > 0$ for any nonlocal $\mathbf p \notin \mathcal L$ (equivalently, $\mathcal N(\mathbf p) = 0$ iff $\mathbf p \in \mathcal L$).  
\item[(N2)] \emph{Monotonicity:} $\mathcal N(\mathcal W(\mathbf p)) \leq \mathcal N(\mathbf p)$ for every free single-copy WCCPI operation $\mathcal W$ (and for convex mixtures thereof).  
\end{itemize}

\begin{lem}
The average device-independent guessing probability against a quantum adversary $H_{\min}(\textbf{p}, \mu_{XY})= -\log_2 P^{(q)}_{\text{g}}(\mathbf p,\mu_{XY})$ for complete-support $\mu_{XY}$  is monotonic under single-copy WCCPI operations $\mathcal W$.
\end{lem}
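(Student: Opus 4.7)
The plan is to prove monotonicity separately for each of the four primitive WCCPI operations and then upgrade to arbitrary convex mixtures via the concavity of $P_g^{(q)}(\cdot,\mu_{XY})$ in its behavior argument. The unifying observation is that every primitive $\mathcal{W}$ can be realised by Alice and Bob through local classical pre- and post-processing of their inputs and outputs, possibly coordinated by a public classical register; any tripartite quantum extension of $\mathbf{p}$ attaining $P_g^{(q)}(\mathbf{p},\mu_{XY})$ can therefore be extended to a tripartite extension of $\mathcal{W}(\mathbf{p})$ by letting Eve absorb the same processing and condition on the public register, and this cannot decrease her guessing power.

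Concretely, for a relabelling $\mathcal{R}$ Eve post-composes her guess with the induced permutation, achieving equality. For an output coarse-graining $\mathcal{G}$ she applies the same coarse-graining to her fine-grained guess; any correct fine-grained guess remains correct after coarse-graining, giving $P_g^{(q)}(\mathcal{G}(\mathbf{p}),\mu_{XY}) \geq P_g^{(q)}(\mathbf{p},\mu_{XY})$. For an input substitution $\mathcal{K}$, Eve feeds the substituted input (a function of the new input $\tilde x$ and any public randomness $\lambda$) into her original measurement $\Gamma_z$, preserving the guessing advantage under the distribution on the new alphabet induced by $\mathcal{K}$. Finally, any local behavior $\mathbf{p}_L$ admits a local-hidden-variable decomposition that Eve can carry as classical side information, so $P_g^{(q)}(\mathbf{p}_L,\mu_{XY})=1$; for $\mathbf{p}' = q\mathbf{p}+(1-q)\mathbf{p}_L$, running the optimal strategy for $\mathbf{p}$ with probability $q$ and the deterministic LHV strategy with probability $1-q$ yields $P_g^{(q)}(\mathbf{p}',\mu_{XY}) \geq qP_g^{(q)}(\mathbf{p},\mu_{XY})+(1-q) \geq P_g^{(q)}(\mathbf{p},\mu_{XY})$.

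To extend these primitive-level statements to a general convex mixture $\mathcal{W} = \sum_i q_i \mathcal{W}_i$, I appeal to concavity of $P_g^{(q)}(\cdot,\mu_{XY})$: given optimal tripartite extensions $p_i^{ABE}$ of $\mathcal{W}_i(\mathbf{p})$, their convex combination $\sum_i q_i p_i^{ABE}$ belongs to $\mathcal{Q}^{trip}$ (which is convex), has the correct marginal by linearity of the marginal constraint in \eqref{eq:gpii}, and attains objective value $\sum_i q_i P_g^{(q)}(\mathcal{W}_i(\mathbf{p}),\mu_{XY})$. Combined with the four primitive inequalities this yields $P_g^{(q)}(\mathcal{W}(\mathbf{p}),\mu_{XY}) \geq P_g^{(q)}(\mathbf{p},\mu_{XY})$, and taking $-\log_2$ (which is monotonically decreasing) delivers the claimed inequality $H_{\min}(\mathcal{W}(\mathbf{p}),\mu_{XY}) \leq H_{\min}(\mathbf{p},\mu_{XY})$.

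The main obstacle I anticipate is the careful bookkeeping of the input distribution when a primitive changes the input alphabet (relabellings and substitutions): the inequality must be interpreted as comparing $H_{\min}(\mathbf{p},\mu_{XY})$ with $H_{\min}(\mathcal{W}(\mathbf{p}),\mu'_{XY})$ for the distribution $\mu'_{XY}$ naturally induced on the new alphabet, and one has to check that completeness of support is preserved---or else explicitly restrict $\mathcal{W}$ so that the hypothesis of the lemma carries over. A secondary, more routine, technicality is that convex combinations of tripartite quantum realisations living on Hilbert spaces of differing dimension must be embedded block-diagonally on Eve's side so that the composite extension genuinely lies in $\mathcal{Q}^{trip}$.
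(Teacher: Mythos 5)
Your proposal is correct and follows essentially the same route as the paper: a case-by-case analysis of the four primitive WCCPI operations, showing in each case that Eve's optimal tripartite extension for $\mathbf{p}$ induces a feasible extension for $\mathcal{W}(\mathbf{p})$ with at least the same objective value, then taking $-\log_2$. Your explicit treatment of convex mixtures via linearity of the objective over $\mathcal{Q}^{trip}$, and your remarks on alphabet bookkeeping and block-diagonal embedding, are sensible refinements of details the paper leaves implicit, but they do not constitute a different argument.
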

    
\begin{proof}

Equivalently to (N2), since the min–entropy is defined as $\mathcal H_{\min}=-\log_2 P_{\text{g}}^{(q)}$, it suffices to prove that the device–independent guessing probability $P_{\text{g}}^{(q)}$ is non-decreasing under any single-copy WCCPI operation $\mathcal W$. In the DIQIP scenario, all free WCCPI operations are implemented outside the measurement phase (before inputs are chosen and after outcomes are produced), allowing unrestricted local processing. However, Eve can still learn the inputs to the device and perform her measurements accordingly. According to the definitions in App.\ref{sec:DIGP}, Eve learns the actual inputs $(x,y)$ and optimizes her strategy after the operations have been applied. Consequently, verifying (N2) amounts to showing that, for each class of free operations $\mathcal W$ in the WCCPI framework, Eve’s optimal guessing probability $P^{(q)}_{\text{g}}$ on $\mathcal W(\mathbf p)$ is at least as large as that on $\mathbf p$.

\noindent\emph{(i) Relabeling $\mathcal R$.}
Let $\mathcal R$ act as a bijection (permutation) on the input and output alphabets: $(x,y) \mapsto (\tilde x,\tilde y)$ and $(a,b) \mapsto (\tilde a,\tilde b)$. Since the optimization is taken over all $p_{ABE|XYZ} \in \mathcal Q^{trip}$, and both the constraints and the objective are invariant under relabeling of the measurement settings $(x,y)$ and the recorded outputs $(a,b)$, a change of variables yields the following equivalence:
\begin{align}
P^{(q)}_{\text{g}}(\mathcal R(\mathbf p), \mu_{XY}(\tilde{x},\tilde{y})) 
= P^{(q)}_{\text{g}}(\mathbf p, \mu_{XY}(x,y)).
\end{align}

\noindent\emph{(ii) Output coarse-graining $\mathcal G$.}
Let $A'$ ($B'$) be a subset of the output set $A$ ($B$) on Alice’s (Bob’s) side, and define the coarse-grained behavior $\mathbf p' = \mathcal G(\mathbf p)$ by merging all outputs in $A'$ ($B'$) into a single new output $a'_{\text{new}} \in A'$ ($b'_{\text{new}} \in B'$). The new probability distribution is given by $p'_{AB|XY}(a'_{\text{new}}, b'_{\text{new}}|x,y) = \sum_{a \in A',\, b \in B'} p_{AB|XY}(a,b|x,y)$. The fixed measurement setting condition and the input distribution remain unchanged, since we can regard the new behavior as having the same number of outputs as before, but with some of them occurring with zero probability.

Now consider the same optimization problem with the new tripartite behavior $p'_{ABE|XYZ}(a',b',e'|x,y,z) \in \mathcal Q^{trip}$ in Eq.\eqref{eq:gpii}. The new objective function becomes
\begin{align}
\max_{p'_{ABE|XYZ}\in \mathcal Q^{trip}} \sum_{x,y} \mu_{XY}(x,y) \sum_{a' \in A_{\text{cg}},\, b' \in B_{\text{cg}}} 
p'_{ABE|XYZ}(a',b',e'=(a',b')|x,y,z),
\label{eq:new obj ocg}
\end{align}
where the coarse-grained output sets are 
$A_{\text{cg}} = \overline{A'} \cup \{a'_{\text{new}}\}$ and 
$B_{\text{cg}} = \overline{B'} \cup \{b'_{\text{new}}\}$, with $\overline{A'}$ and $\overline{B'}$ denoting the complements of $A'$ and $B'$, respectively.  
The optimal guessing strategy satisfies:
\begin{align}
\sum_{e'} p'_{ABE|XYZ}(a'_{\text{new}},b'_{\text{new}},e'|x,y,z) 
&= \sum_{e} \sum_{a \in A',\, b \in B'} p_{ABE|XYZ}(a,b,e|x,y,z), \\
p'_{ABE|XYZ}(a'_{\text{new}},b'_{\text{new}},e' =(a'_{\text{new}},b'_{\text{new}})|x,y,z) 
&= \sum_{a \in A',\, b \in B'} p_{ABE|XYZ}(a,b,e=(a,b)|x,y,z).
\end{align}

Therefore, the original optimal solution for the average guessing probability is also a feasible solution to the new optimization problem in Eq.\eqref{eq:new obj ocg}, with the $\overline{A'}$ part unchanged and the additional output $a'_{\text{new}}$ behaving as shown above under the fixed input distribution $\mu_{XY}$. Since this is a maximization problem, the new guessing probability cannot be smaller than the original one before applying the output coarse-graining operation $\mathcal G$. Hence,
\begin{align}
P^{(q)}_{\text{g}}(\mathcal G(\mathbf p), \mu_{XY}) 
\;\;\geq\;\; P^{(q)}_{\text{g}}(\mathbf p, \mu_{XY}).
\end{align}

\noindent\emph{(iii) Input substitution $\mathcal K$.}  
In the fixed measurement setting, an input substitution operation replaces one measurement setting $x_i \in X$ with another pre-existing setting $x_j \in X$ for $i \neq j$. Define the input substitution map $\mathcal K_{i\leftarrow j}$ by
\begin{equation}
p'_{AB|XY}(a,b|x,y)=
\begin{cases}
p_{AB|XY}(a,b|x_j,y),& x=x_i,\\
p_{AB|XY}(a,b|x,y),& x\neq x_i,
\end{cases}
\end{equation}
Operationally, this can be viewed as the composition of an input shortening operation $\mathcal S$ (removing $x_i$) followed by a correlated input largening operation $\mathcal E_c$ (duplicating $x_j$). After the substitution, the input distribution $\mu_{XY}(x,y)$ is unchanged. Relative to the objective function in the original optimization problem, the terms affected by the substitution are
\begin{equation}
\begin{split}
    \mu_{XY}(x_i,y)&\sum_{a,b}p'_{ABE|XYZ}(a,b,e|x_i,y,z) 
    + \mu_{XY}(x_j,y)\sum_{a,b}p'_{ABE|XYZ}(a,b,e|x_j,y,z) \\
    &= \;  \mu_{XY}(x_i,y)\sum_{a,b}p_{ABE|XYZ}(a,b,e|x_j,y,z) 
    + \mu_{XY}(x_j,y)\sum_{a,b}p'_{ABE|XYZ}(a,b,e|x_j,y,z) \\
    &= \; \sum_{x \in \{x_i,x_j\}, y} \mu_{XY}(x,y) \sum_{a,b}p'_{ABE|XYZ}(a,b,e|x_j,y,z)
\end{split}
\label{eq:prev terms}
\end{equation}
The new optimization objective function is therefore
\begin{equation}
\begin{split}
& P^{(q)}_{\text{g}}(\mathcal K(\mathbf p),\mu_{XY}) 
= \max_{p'} \sum_{x,y} \mu_{XY}(x,y) \sum_{a,b} p'_{ABE|XYZ}(a,b,e|x,y,z) \\
& = \max_{p'} \Bigl( \sum_{x \notin \{x_i,x_j\},\,y} \mu_{XY}(x,y) \sum_{a,b} p'_{ABE|XYZ}(a,b,e|x,y,z) + \sum_{x \in \{x_i,x_j\}, y} \mu_{XY}(x,y) \sum_{a,b} p'_{ABE|XYZ}(a,b,e|x_j,y,z) \Bigr) \\
& \overset{\mathcal R}{=} \max_{p'} \Bigl( \sum_{x \notin \{x_i,x_j\},\,y} \mu_{XY}(x,y) \sum_{a,b} p'_{ABE|XYZ}(a,b,e|x,y,z) + \sum_{x \in \{x_i,x_j\}, y} \mu_{XY}(x,y) \sum_{a,b} p'_{ABE|XYZ}(a,b,e|x_i,y,z) \Bigr)\\
& = P^{(q)}_{\text{g}}(\mathcal R(\mathcal K(\mathbf p)), \mu_{XY}(\tilde{x},\tilde{y})),
\end{split}
\end{equation}
where the last step follows from applying the relabeling operation $\mathcal R_{(x_i \leftrightarrow x_j)}$ to the behavior, which preserves the equivalence relation. From the intermediate steps above, the input weights are unchanged under relabeling of $x_i$ and $x_j$ equaling to the combined weight $\mu_{XY}(x_i,y) + \mu_{XY}(x_j,y)$. Hence, we conclude that in the maximization problem
\begin{align}
\sum_{a,b} p'_{ABE|XYZ}(a,b,e|x_j,y,z) 
= \sum_{a,b} p'_{ABE|XYZ}(a,b,e|x_i,y,z),
\end{align}
which means that the original optimal solution of the optimization remains a solution for the new optimization:
\begin{align}
\sum_{a,b} p'_{ABE|XYZ}(a,b,e|x_j,y,z) 
= \max \, \left\{ \sum_{a,b} p_{ABE|XYZ}(a,b,e|x_i,y,z), \;\; \sum_{a,b} p_{ABE|XYZ}(a,b,e|x_j,y,z) \right\}.
\end{align}
By the same argument as in case (ii), since this is a maximization problem, the guessing probability cannot decrease under input substitution:
\begin{align}
P^{(q)}_{\text{g}}(\mathcal K(\mathbf p),\mu_{XY}) 
\;\;\geq\;\;
P^{(q)}_{\text{g}}(\mathbf p,\mu_{XY}).
\end{align}

\noindent\emph{(iv) Mixing with local behaviors.}  
Consider $\mathbf p' = \lambda \cdot \mathbf p + (1-\lambda)\cdot \ell$ with $\ell \in \mathcal L$ and $\lambda \in [0,1]$.  
The objective function in the new optimization problem can be written as
\begin{align}
    & P^{(q)}_{\text{g}}(\mathbf p', \mu_{XY}) = \max_{p'_{ABE|XYZ}} \sum_{x,y} \mu_{XY}(x,y) \sum_{a,b} p'_{ABE|XYZ}(a,b,e=(a,b)|x,y,z) \\
    & = \max_{p'_{ABE|XYZ}} \sum_{x,y} \mu_{XY}(x,y) \sum_{a,b} \left(\lambda \cdot p_{ABE|XYZ}(a,b,e=(a,b)|x,y,z) + (1-\lambda) \cdot p_{\ell}(a,b,e=(a,b)|x,y,z)\right) \\
    & = \max_{p'_{ABE|XYZ}} \Bigl(\lambda \cdot \sum_{x,y} \mu_{XY}(x,y) \sum_{a,b} (p_{ABE|XYZ}(a,b,e=(a,b)|x,y,z) +  (1-\lambda) \cdot \sum_{x,y} \mu_{XY}(x,y) \sum_{a,b} p_{\ell}(a,b,e=(a,b)|x,y,z))\Bigr)
\end{align}
The constraint on marginal consistency is analogous to Eq.\eqref{eq:prev obj},
\begin{align}
    \mathbf p'_{AB|XY}(a,b|x,y) & = \sum_e p'_{ABE|XYZ}(a,b,e|x,y,z)  \\
    & = \sum_e (\lambda \cdot p_{ABE|XYZ}(a,b,e|x,y,z) + (1-\lambda) \cdot p_{\ell}(a,b,e|x,y,z)) = \lambda \cdot \mathbf p _{AB|XY} + (1-\lambda) \cdot \ell
\end{align}
The new optimization problem is also subject to the constraint 
$\{p'(a,b,e|x,y,z)\} \in \mathcal Q^{\text{trip}}$.  
For any local behavior $\ell$, the adversary can always adopt a strategy that perfectly predicts Alice’s and Bob’s outcomes once the inputs $(x,y)$ are revealed.  
The difference compared to the original objective function in Eq.~\eqref{eq:prev obj} is that certain behavior $p_{ABE|XYZ}$ is replaced by the local behavior $p_{\ell}$ with weight $(1-\lambda)$, while the $\lambda$-weighted part of the optimization remains unchanged.  
This directly yields the relation
\begin{align}
P^{(q)}_{\text{g}}(\mathbf p', \mu_{XY}) 
\;\;\geq\;\;
P^{(q)}_{\text{g}}(\mathbf p, \mu_{XY}).
\end{align}
Hence, mixing with local behaviors cannot reduce the device–independent guessing probability, and therefore cannot increase the associated nonlocality measure.

We remark that two additional operations are sometimes listed: output unfolding $\mathcal U$ (splitting one outcome into different ones) and uncorrelated input enlargement $\mathcal E_u$ (local measurement inputs). As discussed in \cite{DVJI14}, in the fixed measurement setting scenario both operations can be expressed as combinations of the identity operation and output coarse-graining (i.e. $\mathbf p' = (1-\lambda)\cdot \mathbf p + \lambda \cdot \mathcal G(\mathbf p)$). Their effect on the device–independent guessing probability is therefore already subsumed by the arguments given above in~(ii).

Combining the four properties above, we obtain that 
\begin{align}
\mathcal H_{\text{min}}(\mathcal W(\mathbf p),\mu_{XY}) \;\; \leq \;\; \mathcal H_{\text{min}}(\mathbf p,\mu_{XY}),
\end{align}
and hence $\mathcal H_{\text{min}}(\textbf{p}, \mu_{XY})$ obeys (N2).
\end{proof}

\begin{lem}
    Among the guessing probability models defined in App.\ref{sec:DIGP}, the average guessing probability against a quantum adversary defines a proper nonlocality measure: it is faithful and monotonic under all single copy WCCPI operations.
\end{lem}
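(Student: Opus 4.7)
The plan is to combine the monotonicity already established in the previous lemma with a short faithfulness argument to verify both conditions (N1) and (N2). Since (N2) has just been proved for the average quantum guessing probability under each of the four classes of single-copy WCCPI operations (relabeling, output coarse-graining, input substitution, and mixing with local behaviors), the only new work needed is the faithfulness claim (N1), which I would prove as a two-sided characterization and then close out by observing that no other model in App.~\ref{sec:DIGP} supports both conditions.

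For the easy direction, suppose $\textbf{p} \in \mathcal{L}$, so that $p_{AB|XY}(a,b|x,y) = \sum_\lambda q_\lambda \, p^\lambda_A(a|x)\, p^\lambda_B(b|y)$ with each $(p^\lambda_A, p^\lambda_B)$ (without loss of generality) deterministic. An eavesdropper who holds $\lambda$ as classical side information realizes a valid element of $\mathcal{Q}^{trip}$ via a classical-quantum state $\sum_\lambda q_\lambda \, \rho^\lambda_{\mathcal{A}\mathcal{B}} \otimes |\lambda\rangle\langle\lambda|_{\mathcal{E}}$, and, upon learning the pair $(x,y)$, outputs the unique $(a^\lambda_x, b^\lambda_y)$ prescribed by $\lambda$. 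This feasible strategy attains $P^{(q)}_g(\textbf{p}, \mu_{XY}) = 1$ for every input distribution, and hence $\mathcal{H}_{\min}(\textbf{p}, \mu_{XY}) = 0$.

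The converse direction is precisely the content of Thm.~\ref{thm:no-bound-rand}: if $\textbf{p} \notin \mathcal{L}$, then no tripartite quantum extension of $\textbf{p}$ can satisfy the perfect guessing condition \eqref{eq:guessing-game} on the entire support of $\mu_{XY}$, since the proof there extracts a joint probability distribution from any such extension and forces $\textbf{p}$ to be local. Consequently, for any complete-support $\mu_{XY}$ we have $P^{(q)}_g(\textbf{p}, \mu_{XY}) < 1$ and $\mathcal{H}_{\min}(\textbf{p}, \mu_{XY}) > 0$. The two directions together give (N1), and combined with the preceding monotonicity lemma this establishes that the average quantum guessing probability is a proper nonlocality measure.

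To contextualize the ``among'' in the statement, I would close by remarking why the remaining guessing probabilities listed in App.~\ref{sec:DIGP} cannot serve the same role: Lemma~\ref{lem:1} exhibits nonlocal behaviors on which the fixed-setting probabilities (which coincide for classical and quantum adversaries) saturate at $1$, and Lemma~\ref{lem:2} reduces the average no-signalling guessing probability to an average of its fixed-setting versions, inheriting the same bound-randomness counterexamples and thus violating faithfulness. The only non-trivial ingredient in the entire argument is the nonlocal-implies-unpredictability direction of (N1), which is already delivered by Thm.~\ref{thm:no-bound-rand}, so no further technical obstacle arises; the lemma becomes essentially a packaging of the preceding monotonicity lemma with the no-bound-randomness theorem.
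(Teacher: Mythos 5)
Your proposal is correct and follows essentially the same route as the paper: faithfulness is obtained by pairing the trivial direction (a local behavior admits a classical extension in $\mathcal{Q}^{trip}$ on which Eve guesses perfectly, so $P^{(q)}_{g}(\mathbf{p},\mu_{XY})=1$) with the converse supplied by Thm.~\ref{thm:no-bound-rand}, monotonicity is imported wholesale from the preceding lemma, and the contrast with the fixed-setting and no-signalling models is drawn exactly as the paper does via Lemma~\ref{lem:1} and Lemma~\ref{lem:2}. No gaps; the lemma is, as you say, a packaging of the two preceding results.
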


\emph{Fixed-setting (quantum) guessing probability is not faithful.}  
For the fixed pair $(x^*,y^*)$ defined in Eq.\ref{eq:gpi}, there exist nonlocal quantum behaviors $\mathbf p \in \mathcal Q \setminus \mathcal L$ such that $P^{(q)}_{\text{g}}(\mathbf p, (x^*, y^*)) = 1$ (Lemma~\ref{lem:1}). Hence, certain candidate measures $\mathcal H_{\min}^{(q)}(\mathbf p) = -\log_2 P^{(q)}_{\text{g}}(\mathbf p, (x^*, y^*))$ fail to satisfy (N1).

\emph{Average (quantum) guessing probability is faithful.}  
If $\mathbf p \in \mathcal L$, then Eve can share the local variable and deterministically output $(a,b)$ given $(x,y)$. Hence $P^{(q)}_{\text{g}}(\mathbf p,\mu_{XY}) = 1$ and $\mathcal H_{\min}(\mathbf p,\mu_{XY}) = 0$. Conversely, for nonlocal quantum behaviors $\mathbf p \in \mathcal Q \setminus \mathcal L$, Theorem.\ref{thm:no-bound-rand} rules out perfect prediction once $\mu_{XY}$ is known to Eve; in particular, $P^{(q)}_{\text{g}}(\mathbf p,\mu_{XY}) < 1 \text{ and }\mathcal H_{\min}(\mathbf p,\mu_{XY}) > 0$. Thus, $\mathcal H_{\min}(\mathbf p,\mu_{XY})$ satisfies (N1).

\emph{No-signalling adversary: both fixed and average NS guessing probability are not faithful.}  
Let $P^{(ns)}_{\text{g}}$ denote Eve’s guessing probability when the adversary has access to general no-signalling side information. There exist nonlocal behaviors $\mathbf p \in \mathcal{NS} \setminus \mathcal L$ for which a no-signalling Eve can guess perfectly on any pair of input settings (see the remark in App.\ref{app:no-bound-rand}), i.e., $P^{(ns)}_{\text{g}}(\mathbf p, (x^*,y^*)) = 1 \quad \text{for all } (x^*,y^*)$.
Furthermore, for any input distribution $\mu_{XY}$, one has $P^{(ns)}_{\text{g}}(\mathbf p, \mu_{XY}) 
= \sum_{x^*,y^*} \mu_{XY}(x^*,y^*)\, P^{(ns)}_{\text{g}}(\mathbf p, (x^*,y^*)) 
\; \text{(Lemma \ref{lem:2})}$,
and hence $-\log_2 P^{(ns)}_{\text{g}}(\mathbf p,\mu_{XY}) = 0
\; \text{although } \mathbf p \notin \mathcal L$.
Thus, the guessing probability against a no-signalling adversary is not faithful and cannot serve as a proper nonlocality measure.

We therefore conclude that, for quantum adversaries and for complete-support input distributions $\mu_{XY}$, the average guessing probability is a valid nonlocality measure: it is faithful, vanishing if and only if the behavior is local, and monotonic under single-copy WCCPI operations. In contrast, the fixed-setting quantum guessing probability and the guessing probability against a no-signalling adversary are not faithful, as they assign the value $0$ to certain nonlocal behaviors, and thus cannot be regarded as proper nonlocality measures.


\section{Average Guessing Probability in the CHSH Bell test}
\label{app:Pgavg-CHSH}

In this Appendix, we compute an analytical expression for the average guessing probability by a quantum Eve of the measurement outputs of one player Alice in a CHSH Bell test. Here, the averaging is done over the input distribution $\mu_X$ which represents Eve's classical side information about Alice's inputs. 

Consider a CHSH Bell test in which Alice measures binary observables $A_0, A_1 \in \{\pm 1\}$ and similarly Bob measures binary observables $B_0, B_1$ chosen with uniform probabilities in the test rounds. Suppose that Alice and Bob observe a value $I_{obs} \in [2, 2\sqrt{2}]$ for the CHSH Bell expression given as
$I_{CHSH} = \langle A_0 B_0 \rangle + \langle A_0 B_1 \rangle + \langle A_1 B_0 \rangle - \langle A_1 B_1 \rangle$, with the correlators defined as $\langle A_i B_j \rangle:= \sum _{a,b\in\{\pm 1\}} ab\cdot p_{AB|XY}(a,b|A_i,B_j).$

Suppose now that during the randomness generation rounds Alice chooses her inputs $A_0$ and $A_1$ with probabilities $p$ and $1-p$ respectively for some $p \in \left[ \frac{1}{2}, 1 \right]$. We are then interested in the analytical form of the average guessing probability $P_g^{(q, A)}(I_{obs}, \{p,1-p\})$ of Alice's measurement outputs by a quantum adversary Eve. This local average guessing probability is defined following Appendix \ref{sec:DIGP} as
\begin{eqnarray}
\label{eq:gp-al-Bell}
 P^{(q, A)}_{g}(I_{obs}, \{p,1-p\}) &=& \max_{\{p_{ABE|XYZ}\}} \sum_{a= \pm 1} p \cdot p_{AE|XZ}(a, e=a|A_0, z=0) + (1-p) \cdot p_{AE|XZ}(a, e=a|A_1, z= 1) \nonumber \\ 
   && \qquad \qquad \;\; I_{CHSH}\left(p_{AB|XY} \right) = I_{obs} \; \; \; \; \nonumber \\
   && \; \qquad \qquad \; \{ p_{ABE|XYZ} \} \in \mathcal{Q}^{trip}.
\end{eqnarray}
Here, $p_{AE|XZ}$ and $p_{AB|XY}$ are marginals of the tripartite quantum behavior $p_{ABE|XYZ}$ and $I_{obs} \in [2, 2 \sqrt{2}]$ is some observed value by Alice-Bob of the CHSH parameter $I_{CHSH}$. 

\begin{thm}
Consider a CHSH Bell test in which Alice and Bob choose their inputs $A_0, A_1$ and $B_0, B_1$ with uniform probabilities respectively in the test rounds and observe a CHSH value $I_{obs}\in[2,2\sqrt{2}]$. Suppose that during the randomness generation rounds Alice chooses her inputs $A_0, A_1$ with respective probabilities $p, 1-p$ for some $p \in [\frac{1}{2}, 1]$. Then the corresponding average guessing probability of Alice's outputs by a quantum Eve satisfies
\begin{equation}
    P^{(q,A)}_{g}(I_{obs},\{p,1-p\})
    \geq \frac12 + \frac{\sqrt{p^2 + (1-p)^2}}{2} \, f_{\phi_p}(I_{obs}),
\end{equation}
where $\phi_p := 2\arctan\!\left(\frac{1-p}{p}\right)$ and
$f_{\phi}(I_{obs}):=\frac{\sqrt{\frac{\left(4-\alpha_{\min,\phi}^2\right)\left(32-2\alpha_{\min,\phi}^4\cos^2 \phi\right)}{16-8 \alpha_{\min,\phi}^2+\alpha_{\min,\phi}^4\cos^2 \phi}}-I_{obs}}{\alpha_{\min,\phi}}$. Here $\alpha_{\min,\phi}$ is the real root in $[0,\alpha_\phi]$ of
$h_{\phi}(t) = I_{obs}$, with $h_{\phi}(t) :=
\frac{4\sqrt{2}\,[32(8-8t^2+t^4)+2t^4\cos^2\phi(32-8t^2+t^4)-t^8\cos^4\phi]}
{\sqrt{(t^2-4)(t^4\cos^2\phi-16)}\,(16-8t^2+t^4\cos^2\phi)^{3/2}}$, and $\alpha_{\phi}$ the real solution in $[0,2]$ of $\frac{\alpha^2 \sin \phi (96-16\alpha^2-2\alpha^4\cos^2\phi)}
{(4-\alpha^2)(32-16\alpha^2+2\alpha^4\cos^2\phi)} = 1$.
\end{thm}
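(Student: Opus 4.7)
My plan is to reduce \eqref{eq:gp-al-Bell} to a pure two-qubit problem via Jordan's lemma and then solve the resulting Alice-side optimization by encoding the weighted guessing functional through the angle $\phi_p = 2\arctan((1-p)/p)$.

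First I would purify Eve's system and apply Jordan's lemma to the two pairs of binary projective measurements $\{A_0,A_1\}$ and $\{B_0,B_1\}$. This block-diagonalizes Alice's and Bob's Hilbert spaces into two-dimensional invariant subspaces labeled by a classical register $\lambda$ that can be absorbed into Eve's side information; within each block the state is a pure two-qubit state $|\psi_\lambda\rangle$ and the measurements are qubit projective. Eve's optimal post-disclosure action is to output the more likely outcome of $A_x$ under $\rho_A^\lambda = \text{Tr}_B|\psi_\lambda\rangle\langle\psi_\lambda|$, giving
\begin{equation*}
P^{(q,A)}_g = \frac{1}{2} + \frac{1}{2}\,\mathbb{E}_\lambda\!\bigl[\,p\,|\langle A_0\rangle_{\rho_A^\lambda}| + (1-p)\,|\langle A_1\rangle_{\rho_A^\lambda}|\,\bigr],
\end{equation*}
under the constraint $\mathbb{E}_\lambda I_{CHSH}(|\psi_\lambda\rangle) = I_{obs}$, so the problem reduces to a single-block optimization that is then extended affinely through the classical mixture over $\lambda$.

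Next I would adopt a canonical frame inside a qubit block: align Alice's Bloch vector along $\hat z$ with length $r \in [0,1]$, and place the measurement axes $\vec a_0,\vec a_1$ in the $xz$-plane at angles $\theta_0$ and $\theta_0+\Delta$ to $\hat z$. Alice's guessing functional becomes $g = r\bigl(p|\cos\theta_0| + (1-p)|\cos(\theta_0+\Delta)|\bigr)$, while Bob's optimally chosen measurements produce a closed-form CHSH value $I_Q(r,\Delta)$ of Horodecki type. Maximizing $g$ over $\theta_0$ by elementary trigonometry gives $\tan\theta_0^* = (1-p)\sin\Delta/(p+(1-p)\cos\Delta)$ and contracts the functional to $g^* = r\sqrt{1-2p+2p^2}\,F(\Delta,\phi_p)$, where the prefactor is the norm of the weighted vector $(p,1-p)$ and $\phi_p$ emerges as the angle controlling the orientation of the effective weighted observable relative to Alice's measurements. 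The remaining problem is a two-variable optimization of $g^*$ over $(r,\Delta)$ subject to $I_Q(r,\Delta) = I_{obs}$, which I would repackage in a single measurement-parameter $t$ matching the theorem's $\alpha$. A Lagrange multiplier eliminating the constraint produces a first-order condition that, after clearing radicals, becomes the polynomial equation $h_\phi(t) = I_{obs}$; its relevant root in $[0,\alpha_\phi]$ is precisely the declared $\alpha_{\min,\phi}$, while the upper boundary $\alpha_\phi$ is fixed by the transition at which the interior stationary point first touches the edge of the admissible measurement-angle range, supplying the implicit equation for $\alpha_\phi$ in the statement. Substituting $\alpha_{\min,\phi}$ back into $g^*$ yields $f_\phi(I_{obs})$, and assembling the $\tfrac12$ offset with prefactor $\tfrac{\sqrt{1-2p+2p^2}}{2}$ gives the announced bound; the limit $p\to 1$ collapses $\phi_p\to 0$, $\alpha_{\min,0}\to 2\sqrt{8-I_{obs}^2}/I_{obs}$, and $f_0\to \sqrt{8-I_{obs}^2}/2$, recovering \eqref{eq:CHSH-guessprob}.

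The main obstacle is the algebraic elimination in the Lagrangian step: because $g$ couples two absolute-value contributions whose relative orientation is set by $\phi_p$, eliminating Bob's angles and then $\theta_0$ leaves a rational expression whose stationarity condition is a high-degree polynomial in $t$, and identifying the physically correct branch as well as the interior-to-boundary transition at $\alpha_\phi$ is the delicate part. A secondary obstacle, required to upgrade the $\geq$ to the equality asserted in the main text, is to verify that $f_\phi$ is concave in $I_{obs}$ so that Jensen's inequality over the classical mixture over $\lambda$ forces the optimum to be attained at a single pure qubit block; this concavity check is straightforward from the derived closed form but requires care near the endpoint $I_{obs} \to 2$ where $\alpha_{\min,\phi}$ degenerates.
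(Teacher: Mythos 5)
Your proposal is correct in outline but follows a genuinely different route from the paper. The paper does not rederive the Alice-marginal/CHSH trade-off from scratch: its key ingredient is the known closed-form quantum maximum $I_Q^{(\alpha,\phi)}$ of the double-tilted CHSH expression $\alpha\cos(\phi/2)A_0+\alpha\sin(\phi/2)A_1+I_{CHSH}$ (imported from the Mikos-Nuszkiewicz--Kaniewski work), which is used as a one-parameter family of linear Bell-inequality constraints; optimizing the tilt $\alpha$ via $\partial_\alpha\,[(I_Q^{(\alpha,\phi)}-I_{obs})/\alpha]=0$ is what produces $h_\phi$, $\alpha_{\min,\phi}$ and $f_\phi$ directly, and the lower bound on $P_g^{(q,A)}$ then follows by restricting Eve to a single fixed measurement and invoking the (numerically certified) saturability of that tilted bound. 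Your plan instead reconstructs the same trade-off by Jordan's lemma, an explicit qubit parametrization $(r,\theta_0,\Delta)$ with a Horodecki-type optimum for Bob, and a Lagrange-multiplier elimination; this is essentially the Pironio-et-al.\ strategy generalized to the weighted functional, and it buys self-containedness (constructive achievability of the bound, no reliance on the external closed form) at the cost of the high-degree elimination you yourself flag as the delicate step, which is precisely the work the paper avoids by citing $I_Q^{(\alpha,\phi)}$.

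One soft spot to fix: your displayed identity $P^{(q,A)}_g=\tfrac12+\tfrac12\,\mathbb{E}_\lambda[\,p|\langle A_0\rangle_{\rho_A^\lambda}|+(1-p)|\langle A_1\rangle_{\rho_A^\lambda}|\,]$ is not an equality as written. After purification and Jordan's lemma, the state within a block is pure on $ABE$ but generally mixed on $AB$, so Eve holds the purifying system and her guessing probability within a block is $\tfrac12(1+\|p_+\rho_E^+-p_-\rho_E^-\|_1)\geq\tfrac12(1+|\langle A_x\rangle|)$, with possible strict inequality. The majority-vote formula is only the value of a restricted strategy class (which suffices for the stated $\geq$, the only direction the theorem formally asserts), or else requires the standard extremality argument forcing pure two-qubit blocks with Eve decoupled before the Jensen/concavity step can close the tightness claim. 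Note also that the paper itself establishes tightness only numerically (NPA level 3 plus SDP duality), so your proposed analytic concavity check would actually go beyond what the paper proves.
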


\begin{proof}
In this proof, we use the double-tilted CHSH expression studied in~\cite{mikos2023extremal} to derive the average guessing probability as a function of the observed CHSH value $I_{obs}$.

For parameters $\alpha \in [0,2)$ and $\phi \in [0,\frac{\pi}{2}]$, define the double-tilted CHSH expression~\cite{mikos2023extremal} as  
\begin{equation}
	I_{CHSH}^{(\alpha,\phi)}:=\alpha \cos\left(\frac{\phi}{2} \right) A_0 +\alpha \sin \left(\frac{\phi}{2} \right) A_1 + I_{CHSH}.
\end{equation}
The maximum classical value of this Bell expression is seen to be $I_{L}^{(\alpha, \phi)} = \sqrt{2} \alpha \sin\left(\frac{\pi + 2 \phi}{4} \right)+ 2$. In \cite{mikos2023extremal}, it was observed that the maximum quantum value of $I_{CHSH}^{(\alpha,\phi)}$ is greater than the classical value only for $\alpha$ in the region $[0, \alpha_{\phi})$. Here, for any $\phi \in (0,\frac{\pi}{2}]$, $\alpha_{\phi}$ is defined as the real solution in $[0,2]$ of the following equation
\begin{equation}
\label{def:alpha-phi}
	\frac{\alpha^2 \sin \phi\left(96-16 \alpha^2-2\alpha^4\cos^2 \phi\right)}{\left(4-\alpha^2\right)\left(32-16 \alpha^2+2\alpha^4\cos^2 \phi\right)} =1.
\end{equation}
Furthermore, when $\phi = 0$, we set $\alpha_{\phi} = 2$. 
Specifically, for $\phi \in [0,\frac{\pi}{2}]$ and $\alpha \in [0,\alpha_\phi]$, the maximum quantum value of the double-tilted CHSH expression $I_{CHSH}^{(\alpha,\phi)}$ was found to be 
\begin{equation}\label{eq_Iq_dtchsh}
	I_Q^{(\alpha, \phi)} = \sqrt{\frac{\left(4-\alpha^2\right)\left(32-2\alpha^4\cos^2 \phi\right)}{16-8 \alpha^2+\alpha^4\cos^2 \phi}} .
\end{equation}
When the observed value of the CHSH expression $I_{CHSH}$ is $I_{obs} \in [2,2\sqrt{2}]$, we obtain the upper bound for the quantity $\cos\frac{\phi}{2}\<A_0\> + \sin\frac{\phi}{2}\<A_1\>$ for any $\phi \in [0, \frac{\pi}{2}]$ as:
\begin{equation}\label{db_tilted_bound}
	\cos\frac{\phi}{2} \<A_0\> + \sin\frac{\phi}{2} \<A_1\> \leq \min_{\alpha \in [0,\alpha_\phi)} \frac{I_Q^{(\alpha, \phi)} - I_{obs}}{\alpha}.
\end{equation}
By setting for any fixed $\phi \in [0, \frac{\pi}{2}]$,
\begin{equation}
\frac{\partial}{\partial \alpha} \frac{I_Q^{(\alpha, \phi)} - I_{obs}}{\alpha} = 0,
\end{equation}
and solving the resulting equation, we obtain the $\alpha$ that achieves the minimum in the RHS of \eqref{db_tilted_bound} as $\alpha_{\min, \phi} = h_{\phi}^{-1}(I_{obs})$, where
\begin{equation}\label{eq_h_phi}
	h_{\phi}(t) :=\frac{4\sqrt{2}\left[ 32\left(8-8 t^2+ t^4\right)+2 t^4\cos^2\phi\left(32-8 t^2+t^4\right)-t^8\cos^4\phi \right]}{\sqrt{(t^2-4)(t^4\cos^2\phi-16)}\,(16-8 t^2+ t^4\cos^2\phi)^{3/2}}.
\end{equation}
Substituting $\alpha_{\min, \phi}$ back into the RHS of Eq.~\eqref{db_tilted_bound} and denoting the resulting RHS as $f_{\phi}(I_{obs})$, we obtain the upper bound for the quantity $\cos\frac{\phi}{2}\<A_0\> + \sin\frac{\phi}{2}\<A_1\>$ for any $\phi \in [0, \frac{\pi}{2}]$ as:
\begin{equation}
	\cos\frac{\phi}{2} \<A_0\> + \sin\frac{\phi}{2} \<A_1\> \leq f_{\phi}(I_{obs}).
\end{equation}
To study the average guessing probability of Alice's outputs with input distribution $\{p, 1-p\}$ for any $p \in \left[\frac{1}{2}, 1 \right]$, we set
\begin{equation}
	\phi_p :=2\arctan\left(\frac{1-p}{p}\right).
\end{equation}
Substituting $\phi_p$ into $f_{\phi}(I_{obs})$ results in the function $f_{\phi_p}(I_{obs})$, and then we have the bound
\begin{equation}\label{db_tilted_bound_2}
	p \<A_0\> + (1-p) \<A_1\> \leq f_{\phi_p}(I_{obs})\sqrt{p^2 + (1-p)^2}.
\end{equation}
Furthermore, we check that this bound is tight by minimizing the difference of the RHS and LHS of the above expression under the constraint that the observed value of $I_{CHSH}$ is $I_{obs}$, relaxing the constraint set using the NPA semidefinite programming hierarchy and using SDP duality. 

Moreover, applying the local transformation $A_i \leftrightarrow -A_i,\,B_j \leftrightarrow -B_j$ for all $i,j\in\{0,1\}$ also yields a CHSH Bell expression, giving a similar tight bound
\begin{equation}\label{db_tilted_bound_3}
	-p \<A_0\> - (1-p) \<A_1\> \leq f_{\phi_p}(I_{obs})\sqrt{p^2 + (1-p)^2}.
\end{equation}
We finally obtain
\begin{equation}\label{ave_guess_p}
	\begin{split}
		&P^{(q, A)}_{g}(I_{obs}, \{p,1-p\}):= \max_{\substack{\{p_{ABE|XYZ}\} \in \mathcal{Q}^{trip}, \\ I_{CHSH}(p_{AB|XY}) = I_{obs}}} \sum_{a= \pm 1} p \cdot p_{AE|XZ}(a, e=a|A_0, z=0) + (1-p) \cdot p_{AE|XZ}(a, e=a|A_1, z= 1)\\
		&= \max_{\substack{\{p_{ABE|XYZ} \}\in \mathcal{Q}^{trip}, \\ I_{CHSH}(p_{AB|XY}) = I_{obs}}} p \cdot \sum_{a = \pm 1} p_{E|Z}(a|z=0) p_{A|XZE}(a|A_0, z=0,e=a) \\ 
        & \qquad \qquad \qquad \qquad \qquad + (1-p) \cdot \sum_{a = \pm 1} p_{E|Z}(a|z=1)  p_{A|XZE}(a|A_1,z=1,e=a)\\
		& = \max_{\substack{\{p_{ABE|XYZ}\} \in \mathcal{Q}^{trip}, \\ I_{CHSH}(p_{AB|XY}) = I_{obs}}} p \cdot \sum_{a\in\pm 1} p_{E|Z}(a|z=0)  \cdot \frac{1+ a \<A_0\>_{|z=0,e=a}}{2} + (1-p) \cdot \sum_{a\in\pm 1} p_{E|Z}(a|z=1) \cdot  \frac{1+a \<A_1\>_{|z=1,e=a}}{2}\\
		&\geq \frac{1}{2} + \frac{1}{2} \max_{\substack{\{p_{ABE|XYZ=z^*}\} \in \mathcal{Q}^{trip}, \\ I_{CHSH}(p_{AB|XY}) = I_{obs}}} \sum_{a\in\pm 1} p_{E|Z}(a|z=z^*) \left[p \cdot a \<A_0\>_{|z=z^*,e=a}+(1-p) \cdot a \<A_1\>_{|z=z^*,e=a}\right]\\	
		&= \frac{1}{2} + \frac{\sqrt{p^2 + (1-p)^2}}{2} f_{\phi_p}(I_{obs}). \\	
	\end{split}
\end{equation}
Here, we have used $p_{A|X,Z,E}(a|A_0, z=0,e=a) = \frac{1+ a \<A_0\>_{|z=0,e=a}}{2}$ in the third line. In the fourth line, we consider that Eve performs a single fixed measurement $Z=z^*$ to guess Alice's outputs for both $A_0$ and $A_1$, which is a subset of strategies considered in previous lines, resulting in the inequality. In the final line, we have used  Eqs.~\eqref{db_tilted_bound_2} and \eqref{db_tilted_bound_3} and the fact that $\sum_{a}p_{E|Z}(a|z=z^*) = 1$. 
\end{proof}
We plot the lower bound of the average guessing probability $P^{(q, A)}_{g}(I_{obs}, \{p,1-p\})$ (an upper bound on the average min-entropy) derived in Eq.~\eqref{ave_guess_p} for $p=0.5,0.8,0.9,1$ in Fig.~\ref{fig_pg_chsh}.

We also obtain a numerical upper bound on the average guessing probability (a lower bound on the average min entropy) from the level $3$ of the NPA hierarchy of semidefinite programs as shown in Fig.~\ref{fig_pg_chsh} (which we also certify using a feasible solution to the dual program). As evident from Fig.~\ref{fig_pg_chsh}, the analytical bound in Eq.~\eqref{ave_guess_p} is tight, the analytical lower bound coincides with the numerical upper bound up to numerical precision of $10^{-9}$. The corresponding min-entropy $H_{\min}^{(I_{obs}, p)}(a|x,E) =-\log_2\big(P^{(q, A)}_{g}(I_{obs}, \{p,1-p\})\big)$ is shown in Fig.~\ref{fig_hmin_chsh}.

\begin{figure}[htbp]
	\centering
	\subfigure[]{
		\begin{minipage}[t]{0.48\linewidth}
			\centering
			\includegraphics[width=\textwidth]{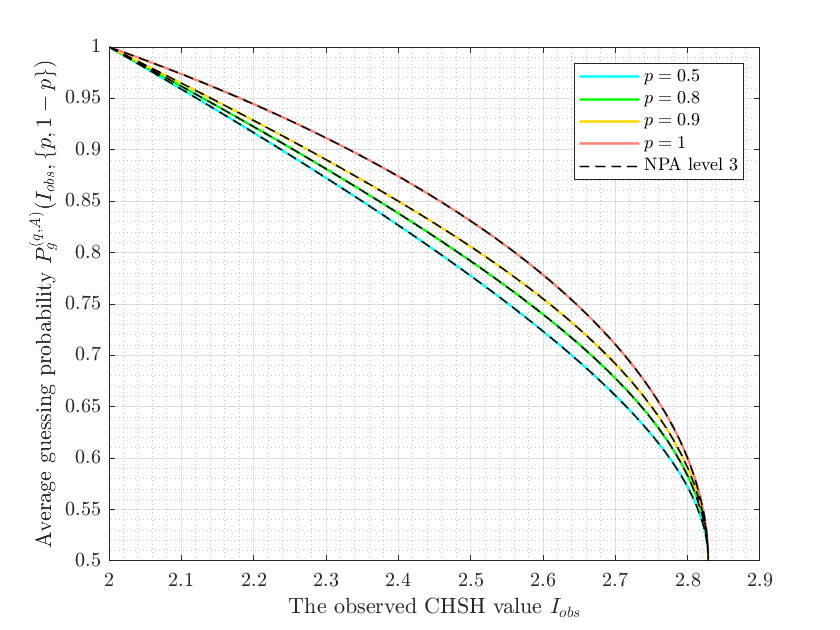}
			\label{fig_pg_chsh}
		\end{minipage}%
	}%
	\subfigure[]{
		\begin{minipage}[t]{0.48\linewidth}
			\centering
			\includegraphics[width=\textwidth]{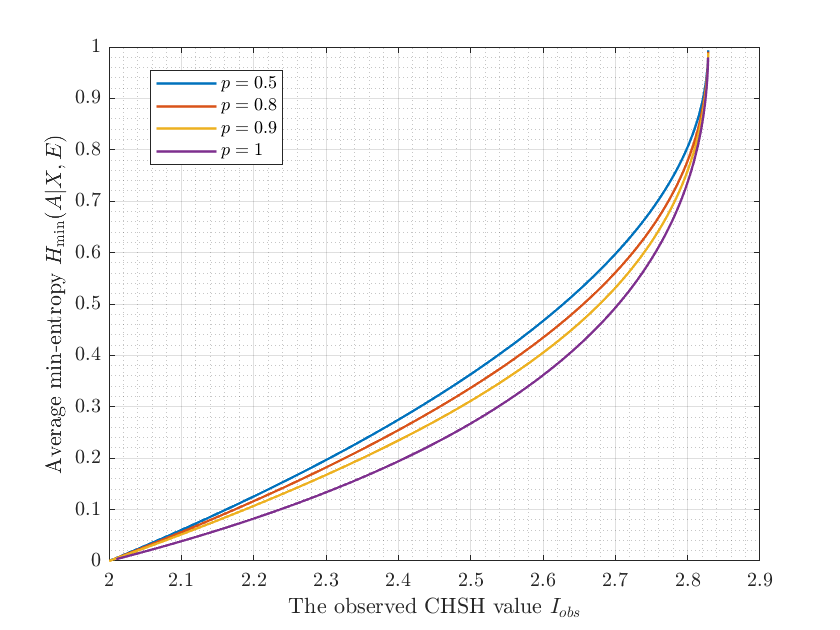}
			\label{fig_hmin_chsh}
		\end{minipage}%
	}
	\centering
	\caption{(a) Solid lines: The lower bound of the average guessing probability $P^{(q, A)}_{g}(I_{obs}, \{p,1-p\})$ as a function of the observed CHSH $I_{obs}\in[2,2\sqrt{2}]$, derived in Eq.~\eqref{ave_guess_p}, where $A_0$ is chosen with probability $p=0.5,0.8,0.9,1$ and $A_1$ is chosen with probability $1-p$. Dashed lines: the upper bounds on the average guessing probability obtained from the level-3 NPA hierarchy~\cite{NPA1,NPA2}. 
	(b) The average min-entropy $H_{\min}(A|X,E)$ as a function of the observed CHSH value $I_{obs}\in[2,2\sqrt{2}]$.}	
\end{figure}

Finally, we note that we recover the well-known analytical expression for the fixed-setting guessing probability in the CHSH Bell test, given as \cite{PAM+10}.
When $p=1$, we have that $\phi_{p=1}=0$. Thus for any any $I_{obs}\in[2,2\sqrt{2}]$, $\alpha_{\min, \phi_{p=1}}$ from Eq.~\eqref{eq_h_phi} and $I_Q^{(\alpha_{\min,\phi_{p=1}},\phi_{p=1})}$ from eq.~\eqref{eq_Iq_dtchsh} are calculated as
\begin{equation}
    \begin{split}
        \alpha_{\min, \phi_{p=1}}=\frac{2\sqrt{8-I_{obs}^2}}{I_{obs}}, \qquad I_Q^{(\alpha_{\min,\phi_{p=1}},\phi_{p=1})}=\frac{8}{I_{obs}}.
    \end{split}
\end{equation}
Thus we obtain the function $f_{\phi_{p=1}}$ and the corresponding guessing probability for $p=1$ and any $I_{obs}\in[2,2\sqrt{2}]$ as
\begin{eqnarray}
    f_{\phi_{p=1}}=\frac{\sqrt{8-I_{obs}^2}}{2},\qquad 
    P^{(q, A)}_{g}(I_{obs}, \{1,0\}) = \frac{1}{2} + \frac{1}{2} \sqrt{2-\frac{I
    ^2_{obs}}{4}}.
\end{eqnarray}

\section{Improving the randomness generation rate in DIRA against quantum adversaries}
\label{app:DIRA-imp}

In this section, we apply the scenario considered in Theorem~\ref{thm:DIRA-main}, where Eve attempts to guess Alice’s outcome averaged over all inputs $x$ rather than a fixed input $x^*$, to improve the randomness generation rate of the state-of-the-art DIRA protocol~\cite{KAF20}. In the main text, we discuss the average guessing probability, whose negative logarithm corresponds to the min-entropy. However, the DIRA protocol of~\cite{KAF20} quantifies randomness using the von Neumann entropy of Alice’s outputs conditioned on Eve’s information. To enable a fair comparison and to demonstrate possible improvements, we therefore extend our framework to compute the average von Neumann entropy over all inputs, subject to the SV-source~\cite{SV84} distributional constraints consistent with the protocol settings.  

Suppose that Alice and Bob’s measurement inputs are independently generated from a structured SV-source~\cite{SV84}, where each input bit is $\epsilon$-biased from uniform given all previous bits. Explicitly, in the $i$-th round of the protocol, Alice’s input distribution $\mu_{X}(x)$ satisfies
\begin{equation}\label{sv_con}
	\frac{1}{2}-\epsilon\leq \mu_{X_i|X_1,\ldots,X_{i-1}}(x_i|x_1,\ldots,x_{i-1})\leq \frac{1}{2}+\epsilon
\end{equation}
and similarly for Bob's inputs $y_i$.  
Based on this input distribution, denoting $\epsilon_{\min}:=(\frac{1}{2}-\epsilon)^2$ and $\epsilon_{\max}:=(\frac{1}{2}+\epsilon)^2$, the protocol employs the following bipartite binary-input and binary-output measurement-dependent-locality (MDL) Bell test:
\begin{equation}
	I^{(\epsilon)}:= \epsilon_{\min}\cdot p_{ABXY}(0,0,0,0)
	-\epsilon_{\max}\cdot \big(p_{ABXY}(0,1,0,1) + p_{ABXY}(1,0,1,0) + p_{ABXY}(0,0,1,1) \big).
\end{equation}
The maximal classical value for this MDL Bell test is $0$ for any $\epsilon\in[0,1/2]$. However, this bound can always be violated for any $\epsilon\in(0,1/2]$ by quantum strategies that violate the Hardy paradox~\cite{Hardy93} regardless of the input distribution $\mu_{X,Y}$.

In~\cite{KAF20}, the authors derive the relation between the von Neumann entropy of the output for the fixed input $x^*=0$ and the observed value $I_{obs}^{(\epsilon)}$ of the MDL Bell test, denoted as $H^{(A,\epsilon)}(I_{obs}^{(\epsilon)},x^*=0)$. To achieve this, it is useful to bound the value of the MDL Bell test by a Bell test that does not depend on the input distribution:
\begin{equation}
	\begin{split}
		I^{(\epsilon)}&=\epsilon_{\min}\cdot p_{AB|XY}(0,0|0,0) \cdot \mu_{XY}(0,0) \\
		&- \epsilon_{\max}\cdot \Big( p_{AB|XY}(0,1|0,1) \cdot \mu_{XY}(0,1)
		+p_{AB|XY}(1,0|1,0)\cdot \mu_{XY}(1,0)
		+p_{AB|XY}(0,0|1,1)\cdot \mu_{XY}(1,1) \Big)\\
		&\leq \epsilon_{\min}\epsilon_{\max} \cdot \left( p_{AB|XY}(0,0|0,0)-  p_{AB|XY}(0,1|0,1) - p_{AB|XY}(1,0|1,0)-p_{AB|XY}(0,0|1,1)\right)\\
		&=\epsilon_{\min}\epsilon_{\max} \widetilde{I}.\\
	\end{split}
\end{equation}
where we denote $\widetilde{I}:= p_{AB|XY}(0,0|0,0)-  p_{AB|XY}(0,1|0,1) - p_{AB|XY}(1,0|1,0)-p_{AB|XY}(0,0|1,1)$, a Bell expression that is independent of the input distribution. The quantum value of $\widetilde{I}$ lies in the interval $[0,\frac{\sqrt{2}-1}{2}]$. In the second line of the above derivation, we use the fact that $\epsilon_{\min}\leq \mu_{XY}(x,y)\leq \epsilon_{\max},\forall xy\in\{00,01,10,11\}$ and the non-negativity of $p_{AB|XY}$. One should note that, for a given $\epsilon\in(0,\tfrac{1}{2}]$, although $I^{(\epsilon)}$ admits the upper bound $\epsilon_{\min}\epsilon_{\max}\,\widetilde{I}$, the observed value $I^{(\epsilon)}_{obs}$ in the MDL Bell test does not necessarily reach $\epsilon_{\min}\epsilon_{\max}\,\tfrac{\sqrt{2}-1}{2}$. Nevertheless, we can always derive a lower bound on $I^{(\epsilon)}$:
\begin{equation}
	I^{(\epsilon)} \geq \epsilon_{\min}^2 \cdot p_{AB|XY}(0,0|0,0)
	- \epsilon_{\max}^2 \cdot \Big( p_{AB|XY}(0,1|0,1)
	+ p_{AB|XY}(1,0|1,0)
	+ p_{AB|XY}(0,0|1,1) \Big).
\end{equation}
Denoting the optimal quantum value of the right-hand side as $I_{Q,\text{low}}^{(\epsilon)}$, the observed value $I^{(\epsilon)}_{obs}$ of the MDL Bell test must lie within $[0, I_{Q,\text{low}}^{(\epsilon)}]$.

To demonstrate the improvement in randomness generation rate compared to~\cite{KAF20}, we compute the von Neumann entropy averaged over Alice’s input distribution $\mu_X$, subject to the SV-source constraints in Eq.~\eqref{sv_con}, given the observed MDL Bell test value $I_{obs}^{(\epsilon)}$. In other words, we are interested in solving the following optimization problem:
\begin{equation}\label{DIRA_opt1}
	\begin{split}
		\min_{\mu_{X},\; p_{ABE|XYZ}} & \sum_{x\in\{0,1\}} \mu_{X}(x) H(A|x,E,Z)\\
		\text{s.t.} \qquad \quad & \widetilde{I}(p_{AB|XY})\geq \frac{I_{obs}^{(\epsilon)}}{\epsilon_{\min}\epsilon_{\max}}\in [0,\frac{I_{Q,\text{low}}^{(\epsilon)}}{\epsilon_{\min}\epsilon_{\max}}], \; \; \forall z\\
		& \tfrac{1}{2}-\epsilon \leq \mu_{X}(x)\leq \tfrac{1}{2}+\epsilon \; \; \; \; \forall x, \qquad \sum_{x} \mu_{X}(x)=1;\\
		& \{p_{ABE|XYZ}\}\in\mathcal{Q}^{trip}.\\
	\end{split}
\end{equation}

Here $H(A|x,E,Z)$ denotes the von Neumann entropy of Alice’s outcome conditioned on Eve’s quantum side information $E,Z$, given that Alice’s input is $x$. Denote the solution of the above optimization problem as $H^{(A,\epsilon)}(I_{obs}^{(\epsilon)},\text{SV}_{\epsilon})$. 
An important observation is that the optimal value of~\eqref{DIRA_opt1} can always be achieved when $\mu_{X}(x=0)=\tfrac{1}{2}-\epsilon,\mu_{X}(x=1)=\tfrac{1}{2}+\epsilon$ or $\mu_{X}(x=0)=\tfrac{1}{2}+\epsilon,\mu_{X}(x=1)=\tfrac{1}{2}-\epsilon$. This is because, for any fixed tripartite quantum behavior $p_{A,B,E|X,Y,Z}$, the values of $H(A|x=0,E,Z)$ and $H(A|x=1,E,Z)$ are fixed. Without loss of generality, assume $H(A|x=0,E,Z)\geq H(A|x=1,E,Z)$. In this case, the objective function is minimized when the input probability $\mu_X(x=0)$ takes its lowest allowed value, $\tfrac{1}{2}-\epsilon$, with $\mu_X(x=1)$ taking its biggest allowed value $\tfrac{1}{2}+\epsilon$. 

Using this observation, the optimization problem reduces to:
\begin{equation}
	\begin{split}
		\min_{\{p_{ABE|XYZ}\}} & \left(\tfrac{1}{2}-\epsilon\right) H(A|x=0,E,Z)+\left(\tfrac{1}{2}+\epsilon\right) H(A|x=1,E,Z)\\
		\text{s.t.} \qquad & \widetilde{I}(p_{AB|XY})\geq \tfrac{I_{obs}^{(\epsilon)}}{\epsilon_{\min}\epsilon_{\max}}\in [0,\frac{I_{Q,\text{low}}^{(\epsilon)}}{\epsilon_{\min}\epsilon_{\max}}], \;\; \forall z;\\
		& \{p_{ABE|XYZ} \}\in\mathcal{Q}^{trip}.\\
	\end{split}
\end{equation}
This optimization problem can be efficiently solved by adapting the semidefinite programming method introduced by Brown, Fawzi, and Fawzi in~\cite{BFF24}. To illustrate the improvement, Figure \ref{fig_DIRA} compares two von Neumann entropy curves, $H^{(A,\epsilon)}(I_{obs}^{(\epsilon)},x^*=0)$, the von Neumann entropy at the fixed input $x^*=0$, and $H^{(A,\epsilon)}(I_{obs}^{(\epsilon)},\text{SV}_{\epsilon})$,  the average von Neumann entropy. The comparison is performed for five different values of $\epsilon\in\{0,0.0125,0.025,0.0375,0.05\}$. All curves are obtained using the same parameters as explained in \cite{BFF24}, with the NPA hierarchy level set to $2+ABZ$ and the Gauss–Radau quadrature level set to $6$.

\begin{figure}
    \centering
    \includegraphics[width=0.7\linewidth]{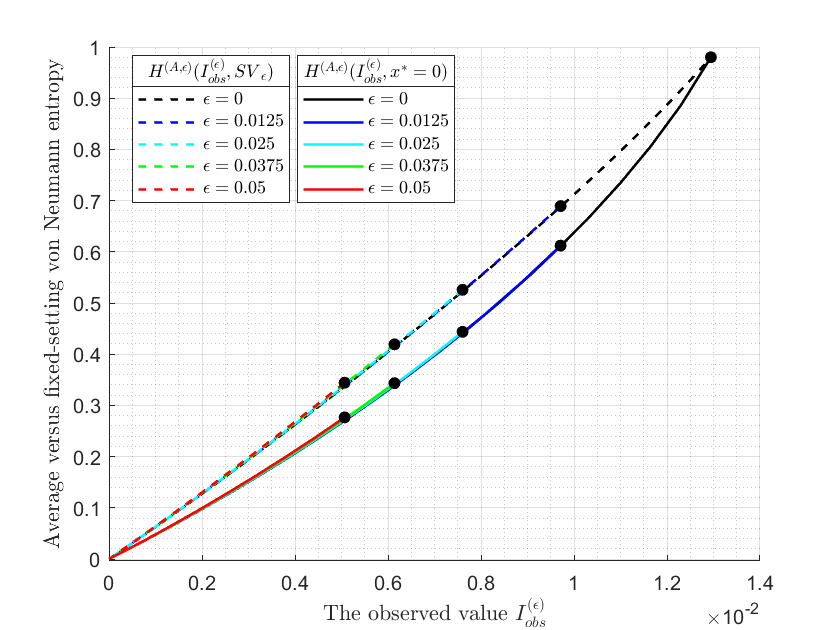}
    \caption{Comparison between $H^{(A,\epsilon)}(I_{obs}^{(\epsilon)},x^*=0)$, the von Neumann entropy at fixed input $x^*=0$, and $H^{(A,\epsilon)}(I_{obs}^{(\epsilon)},\text{SV}_{\epsilon})$, the average von Neumann entropy. Both quantities are obtained using the adapted SDP method of Brown, Fawzi, and Fawzi ~\cite{BFF24}, with the NPA hierarchy~\cite{NPA1,NPA2} level set to $2+ABZ$ and Gauss–Radau quadrature level set to $6$.}
    \label{fig_DIRA}
\end{figure}

\section{Analytical comparison of fixed-setting vs average guessing probability for an optimized Bell expression}
\label{app:Pgavg-simplex}
In this Appendix, we present a new bipartite Bell inequality involving three binary-outcome measurement settings for each party, and we provide an explicit analytical characterization of the optimal guessing strategy for a quantum eavesdrop Eve. Specifically, we analytically calculate the adversary’s guessing probability for Alice’s outcome, both in the fixed-setting scenario and in terms of the average guessing probability. The feature of this Bell expression, as we shall see, is that it allows for perfect correlations between Alice and Bob's measurement outcomes for pairs $(A_i, B_i)$ for every $i$ and thus potentially provides an improvement over the CHSH expression in tasks such as key distribution, when multiple settings are chosen during the generation rounds, a feature which we explore in forthcoming work.  

Denote by $A_0, A_1, A_2$ the three binary-outcome measurements (with outcomes $\{\pm 1\}$) on Alice's subsystem, and by $B_0, B_1, B_2$ the three binary-outcome measurements on Bob's subsystem. The Bell expression is defined as
\begin{equation}\label{Id}
	I = \sum_{i=0}^2 \<A_iB_i\> - \sum_{\substack{i,j=0,\\ i \neq j}}^2 \<A_iB_j\>,
\end{equation}
where the correlators $\<A_i B_j\>$ are defined as $\langle A_i B_j \rangle:= \sum _{a,b\in\{\pm 1\}} ab\cdot p_{AB|XY}(a,b|A_i,B_j).$
It is straightforward to verify that the maximum local value of this Bell expression is $I_{L}=5$. On the other hand, the maximum quantum value is $I_{Q}=6$, as established in the following lemma.

\begin{lem}
The maximum quantum value of the Bell expression in Eq.~\eqref{Id} is $I_{Q}=6$.
\end{lem}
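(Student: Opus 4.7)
My plan is to convert the operator problem into a geometric optimization over unit vectors using Tsirelson's theorem, prove the upper bound $I \leq 6$ via a quadratic identity that exhibits $6 - I$ as a sum of non-negative terms, and then exhibit an explicit two-qubit strategy saturating it. Tsirelson's theorem ensures that for any bipartite state and $\pm 1$-valued observables $A_i, B_j$ on Alice's and Bob's subsystems, there exist real unit vectors $\hat{a}_0, \hat{a}_1, \hat{a}_2$ and $\hat{b}_0, \hat{b}_1, \hat{b}_2$ with $\<A_i B_j\> = \hat{a}_i \cdot \hat{b}_j$, and conversely every such configuration is realizable on a maximally entangled state. Writing $\vec{A} := \sum_i \hat{a}_i$ and $\vec{B} := \sum_j \hat{b}_j$, the problem reduces to
\begin{equation*}
I_Q \;=\; \sup_{\hat{a}_i, \hat{b}_j \in S^{N-1}} \Big[\, 2 \sum_i \hat{a}_i \cdot \hat{b}_i \;-\; \vec{A} \cdot \vec{B} \,\Big].
\end{equation*}

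Next, I would introduce the displacements $u_i := \hat{a}_i - \hat{b}_i$ and use the identities $2 \hat{a}_i \cdot \hat{b}_i = 2 - \|u_i\|^2$, $\vec{A} - \vec{B} = \sum_i u_i$, together with the parallelogram law $4\,\vec{A} \cdot \vec{B} = \|\vec{A} + \vec{B}\|^2 - \|\vec{A} - \vec{B}\|^2$, to rewrite
\begin{equation*}
6 - I \;=\; \sum_i \|u_i\|^2 \;-\; \tfrac{1}{4}\big\|\textstyle\sum_i u_i\big\|^2 \;+\; \tfrac{1}{4}\|\vec{A} + \vec{B}\|^2.
\end{equation*}
Combined with the Cauchy--Schwarz estimate $\|\sum_i u_i\|^2 \leq 3 \sum_i \|u_i\|^2$, this yields $6 - I \geq \tfrac{1}{4}\sum_i \|u_i\|^2 + \tfrac{1}{4}\|\vec{A} + \vec{B}\|^2 \geq 0$, establishing $I_Q \leq 6$.

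For achievability, the two remaining non-negative terms vanish exactly when $\hat{a}_i = \hat{b}_i$ for all $i$ and $\vec{A} = 0$. The three planar unit vectors $\hat{a}_i = \hat{b}_i = (\cos(2\pi i/3), \sin(2\pi i/3))$ satisfy both, and choosing $A_i = B_i = \cos(2\pi i/3)\,\sigma_x + \sin(2\pi i/3)\,\sigma_z$ on the maximally entangled state $|\Phi^+\> = (|00\> + |11\>)/\sqrt{2}$ (for which $\<M \otimes N\> = \tfrac{1}{2}\text{Tr}(M N^T)$ and $\sigma_x, \sigma_z$ are symmetric) produces $\<A_i B_j\> = \hat{a}_i \cdot \hat{b}_j$, so $I = 3 - 6\cdot(-\tfrac{1}{2}) = 6$.

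The main obstacle is pinning down the quadratic identity in the second step: the naive estimate $\hat{a}_i \cdot \hat{b}_i \leq 1$ controls only the first sum, while $-\vec{A} \cdot \vec{B}$ can be positive, so one must couple the Cauchy--Schwarz slack on the $\hat{a}_i \cdot \hat{b}_i$ terms to the expansion of $\vec{A} - \vec{B}$ through the parallelogram identity so that the cross term is exactly absorbed. Equivalently, this identity is the SOS certificate dualizing the Tsirelson SDP with coefficient matrix $2 I_3 - J_3$, whose spectrum $\{2, 2, -1\}$ pins down the factor $\tfrac{1}{4}$; once the right combination of elementary estimates is identified, the remainder of the argument is a routine check.
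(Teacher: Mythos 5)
Your proof is correct, and it takes a genuinely different route from the paper. The paper proves the upper bound by exhibiting an explicit operator sum-of-squares decomposition: it writes $6\mathbb{I}-I=R^{\dagger}MR$ for a PSD coefficient matrix $M$ over the monomial basis $R=[A_0,A_1,A_2,B_0,B_1,B_2]^{\dagger}$ and Cholesky-factors it into four squared operators, a certificate that is immediately machine-verifiable and also encodes the algebraic relations an optimal state must satisfy (useful for self-testing). You instead pass to Tsirelson's vector picture — legitimate here since $I$ is a pure correlation expression with no marginal terms, though strictly one should note that Tsirelson's theorem is stated for $\pm1$-valued observables squaring to the identity, with general POVMs handled by a standard dilation — and establish the exact identity $6-I=\sum_i\|u_i\|^2+\tfrac14\|\vec A+\vec B\|^2-\tfrac14\|\sum_i u_i\|^2$, which combined with $\|\sum_i u_i\|^2\le 3\sum_i\|u_i\|^2$ gives $6-I\ge\tfrac14\sum_i\|u_i\|^2+\tfrac14\|\vec A+\vec B\|^2\ge0$; I have checked this identity and the Cauchy--Schwarz step, and both are correct. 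Your approach makes the extremal configuration geometrically transparent (three coplanar unit vectors summing to zero, with $\hat a_i=\hat b_i$), cleanly characterizes the equality case, and your closing remark correctly identifies your identity as the dual certificate for the coefficient matrix $2I_3-J_3$ with spectrum $\{2,2,-1\}$ — essentially the same SOS the paper writes at the operator level. Your achieving strategy ($A_i=B_i$ symmetric real observables at angles $2\pi i/3$ on $\ket{\Phi^+}$, using $\langle M\otimes N\rangle=\tfrac12\Tr(MN^{T})$) is equivalent to the paper's up to a relabeling of the Pauli basis and correctly yields $I=3-6\cdot(-\tfrac12)=6$.
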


\begin{proof}
We first show that the lower bound of $I_Q$ is $6$ by exhibiting a quantum strategy that achieves this value. Consider the maximally entangled two-qubit state $|\psi^+\> = \frac{1}{\sqrt{2}}(|00\> + |11\>)$ shared between Alice and Bob. They perform the following synchronous measurements:
\begin{equation}\label{sync_operator}
	\begin{split}
		A_0 = B_0 &= \sigma_z,\\
		A_1 = B_1 &= \cos{\frac{4\pi}{3}}\,\sigma_z + \sin{\frac{4\pi}{3}}\,\sigma_x,\\
		A_2 = B_2 &= \cos{\frac{2\pi}{3}}\,\sigma_z + \sin{\frac{2\pi}{3}}\,\sigma_x,
	\end{split}
\end{equation}
where $\sigma_z$ and $\sigma_x$ denote the Pauli operators. This strategy achieves value $6$, showing that $I_Q \geq 6$.

Next, we prove that the upper bound of $I_Q$ is also $6$. To this end, consider the sum-of-squares (SOS) decomposition of the shifted Bell operator $\overline{I} := 6\mathbb{I} - I$. Taking the monomial basis $R = [A_0, A_1, A_2, B_0, B_1, B_2]^{\dagger}$, the shifted Bell operator $\overline{I}$ can be written as
\begin{equation}
	\overline{I} = R^{\dagger} M R, \quad \text{with the PSD matrix } M = \left(\begin{matrix}
		1 & 0 & 0 & -\frac{1}{2} & \frac{1}{2} & \frac{1}{2}\\
		0 & 1 & 0 & \frac{1}{2} & -\frac{1}{2} & \frac{1}{2}\\
		0 & 0 & 1 & \frac{1}{2} & \frac{1}{2} & -\frac{1}{2}\\
		-\frac{1}{2} & \frac{1}{2} & \frac{1}{2} & 1 & 0 & 0\\
		\frac{1}{2} & -\frac{1}{2} & \frac{1}{2} & 0 & 1 & 0\\
		\frac{1}{2} & \frac{1}{2} & -\frac{1}{2} & 0 & 0 & 1\\
	\end{matrix}\right).
\end{equation}
Applying the Cholesky decomposition to $M$ yields a SOS decomposition of the shifted Bell operator:
\begin{equation}
	\overline{I} = \left(A_0 - \frac{B_0}{2} + \frac{B_1}{2} + \frac{B_2}{2}\right)^2 + \left(A_1 + \frac{B_0}{2} - \frac{B_1}{2} + \frac{B_2}{2}\right)^2 + \left(A_2 + \frac{B_0}{2} + \frac{B_1}{2} - \frac{B_2}{2}\right)^2 + \left(\frac{B_0}{2} + \frac{B_1}{2} + \frac{B_2}{2}\right)^2.
\end{equation}
This shows that $\overline{I}$ is positive semidefinite $\overline{I}\succeq 0$, and hence the value of $I$ is smaller than or equal to $6$ for all quantum strategies. Combining both bounds, we conclude that $I_Q = 6$.
\end{proof}

\subsection{The Optimal Guessing Strategy in the Fixed-Setting Scenario}\label{sec_simplex_fix}
We now turn to the adversarial scenario and establish Eve’s optimal guessing strategy in the fixed-setting case. In addition, we assume that Eve attempts to guess the outcome of Alice's measurement $A_0$.
One candidate guessing strategy for Eve is as follows. Eve prepares a pure three-qubit state $|\Psi\>_{ABE}$ shared between the Alice-Bob devices and her own system:
\begin{equation}
	|\Psi\>_{ABE} = \frac{1}{\sqrt{2}}\left(|0\>_A |0\>_B |e_0\>_E + |1\>_A |1\>_B |e_1\>_E\right),
\end{equation}
where $|e_0\>$ and $|e_1\>$ are normalized qubit states with overlap $\<e_0|e_1\> = \sin{2\theta}$ for a parameter $\theta \in [\frac{\pi}{12}, \frac{\pi}{4}]$. In addition, Eve prepare the local measurement operators for Alice and Bob's subsystem:
\begin{equation}
	\begin{array}{ll}
		& A_0 = B_0 = \sigma_z, \\[0.5ex]
		& A_1 = B_1 = \frac{\sqrt{4\sin^2 2\theta - 1}}{2\sin 2\theta} \,\sigma_x - \frac{1}{2\sin 2\theta} \,\sigma_z,\\[0.5ex]
		& A_2 = B_2 = -\frac{\sqrt{4\sin^2 2\theta - 1}}{2\sin 2\theta} \,\sigma_x - \frac{1}{2\sin 2\theta} \,\sigma_z.
	\end{array}
\end{equation}
Under this setup, the reduced state on Alice and Bob's subsystems is
\begin{equation}
	\rho_{AB} = \frac{\mathbb{I} \otimes \mathbb{I}}{4} + \frac{\sigma_z \otimes \sigma_z}{4} + \sin{2\theta} \,\frac{\sigma_x \otimes \sigma_x}{4} - \sin{2\theta} \,\frac{\sigma_y \otimes \sigma_y}{4}.
\end{equation}
A direct calculation then yields the observed value $I_{obs}$ of the Bell expression in Eq.~\eqref{Id} under this setup:
\begin{equation}\label{wq}
	I_{obs} = \Tr\left[\rho_{AB} \left(\sum_{i=0}^2 A_i B_i - \sum_{\substack{i,j=0\\ i \neq j}}^2 A_i B_j \right)\right] = 1 + 4\sin{2\theta} + \frac{1}{\sin{2\theta}}.
\end{equation}

After Alice performs the measurement $A_0$, the correlation between the classical outcomes and Eve's system is described by the classical–quantum state:
\begin{equation}\label{cq_state1}
	\frac{1}{2} [+1]_{A_0} \otimes |e_0\>\<e_0| + \frac{1}{2} [-1]_{A_0} \otimes |e_1\>\<e_1|,
\end{equation}
where $[+1]_{A_i}$ and $[-1]_{A_i}$ label the two possible outcomes of $A_i$. Eve's optimal guessing strategy in this case is to perform the measurement that optimally discriminates between the two pure states $|e_0\>$ and $|e_1\>$ with uniform prior probabilities. According to the seminal result by Helstrom~\cite{Helstrom69}, the minimum-error discrimination is achieved by performing the two-outcome POVM $E_0 = \{E_0^0,\, E_0^1 = \mathbb{I}_2 - E_0^0\}$, where $E_0^0$ is the projector onto the positive eigenspace of the operator $\frac{1}{2} |e_0\>\<e_0| - \frac{1}{2} |e_1\>\<e_1|$. The resulting guessing probability is:
\begin{equation}\label{pg1}
	\widetilde{P}_{g,x^*=0}^{(A,q)}=\frac{1}{2}\left(1 + \sqrt{1 - |\<e_0|e_1\>|^2}\right) = \frac{1}{2}\left(1 + \cos{2\theta}\right).
\end{equation}

Combining Eq.~\eqref{pg1} with the observed quantum value $I_{obs}$ in Eq.~\eqref{wq}, we obtain an explicit relation between the guessing probability and $I_{obs}$. Eliminating the parameter $\theta$ yields the guessing probability as a function of $I_{obs}$, denote it by $f_{\text{fix}} (I_{obs})$
\begin{equation}\label{pg_w_1}
	\begin{split}
		f_{\text{fix}}(I_{obs}) &= \frac{1}{2} + \frac{1}{2} \sqrt{1 - \frac{1}{64} \left( I_{obs} - 1 + \sqrt{(I_{obs} + 3)(I_{obs}- 5)} \right)^2 }.
	\end{split}
\end{equation}

However, taking the second derivative of $f_{\text{fix}}(I_{obs})$ shows that it is not concave. Furthermore, when $I_{obs} = 5$, corresponding to the maximum classical value, the system behaves classically and Eve can guess the outcome of $A_0$ with certainty, i.e., guessing probability equals $1$. However, $f_{\text{fix}}(5) = \frac{1}{2} + \frac{\sqrt{3}}{4}$, which is strictly less than $1$. This implies that for non-maximal quantum value of the Bell expression, Eve may have more effective strategies to guess Alice's outcome.

One such strategy is to prepare a convex mixture of a local behavior and the quantum behavior that achieves the value $I_{\text{fix}}^*$, thereby generating a quantum value for $I$ smaller than $I_{\text{fix}}^*$. In this case, the guessing probability becomes the convex combination of $1$ and $f_{\text{fix}}(I_{\text{fix}}^*)$ with corresponding weights. The concrete value of $I_{\text{fix}}^*$ is the maximizer of the following optimization problem:
\begin{equation}
	\max_{I_{obs}} \quad \frac{f_{\text{fix}}(I_{obs}) - 1}{I_{obs}- 5} \quad \rightarrow \quad I_{\text{fix}}^* = \frac{1}{3}(4 + 5\sqrt{7}).
\end{equation}
Note that $f_{\text{fix}} (I_{\text{fix}}^*) = \frac{1}{8}(3 + \sqrt{7})$. Thus, under this strategy, the guessing probability takes the form of a piecewise function:
\begin{equation}\label{pg_0}
	P_g^{(A,q)}(I_{obs},x^*=0) =
	\begin{cases}
		\frac{1}{2} + \frac{1}{2} \sqrt{1 - \frac{1}{64} \left( I_{obs} - 1 + \sqrt{(I_{obs} + 3)(I_{obs}- 5)} \right)^2}, & I_{obs} \in \left[ \frac{1}{3}(4 + 5\sqrt{7}),\, 6 \right], \\[1ex]
		-\frac{10 + 7\sqrt{7}}{72} I_{obs} + \frac{122 + 35\sqrt{7}}{72}, & I_{obs} \in \left[ 5,\, \frac{1}{3}(4 + 5\sqrt{7}) \right].
	\end{cases}
\end{equation}

A numerical upper bound on the guessing probability can be obtained using the Navascués-Pironio-Acín hierarchy of SDPs~\cite{NPA1,NPA2}. Figure~\ref{fig_pg} displays the analytical curve of Eq.~\eqref{pg_0}, derived from our candidate strategy, alongside the SDP upper bound obtained from NPA level $k=3$. As evident from Figure~\ref{fig_pg}, the analytical curve in Eq.~\eqref{pg_0} is tight, confirming the optimality of our candidate strategy. The corresponding min-entropy $H_{\min}(A|x^*=0,E) = -\log_2(P_g^{(A,q)}(I_{obs},x^*=0))$ is shown in Figure~\ref{fig_hmin}.

\begin{figure}[htbp]
\centering
\subfigure[]{
\begin{minipage}[t]{0.48\linewidth}
\centering
\includegraphics[width=\textwidth]{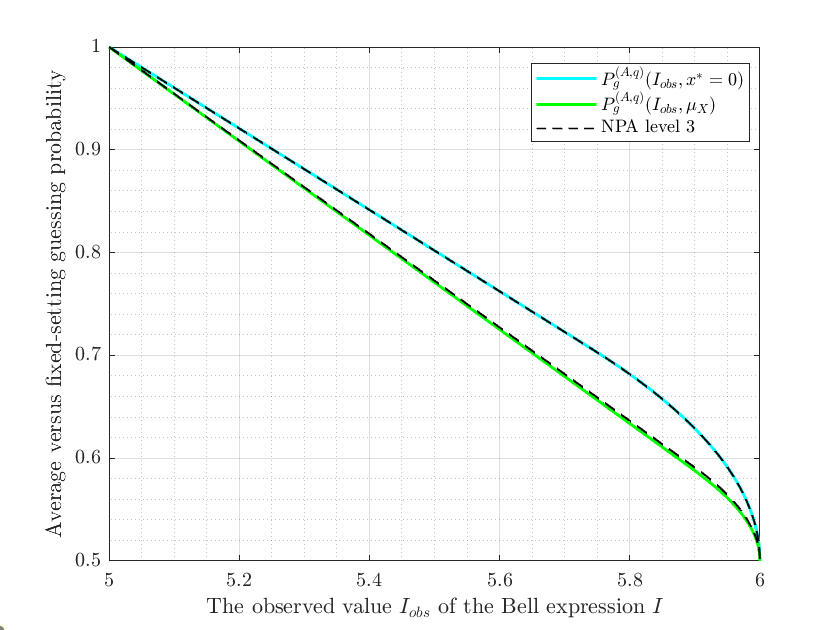}
\label{fig_pg}
\end{minipage}%
}%
\subfigure[]{
\begin{minipage}[t]{0.48\linewidth}
\centering
\includegraphics[width=\textwidth]{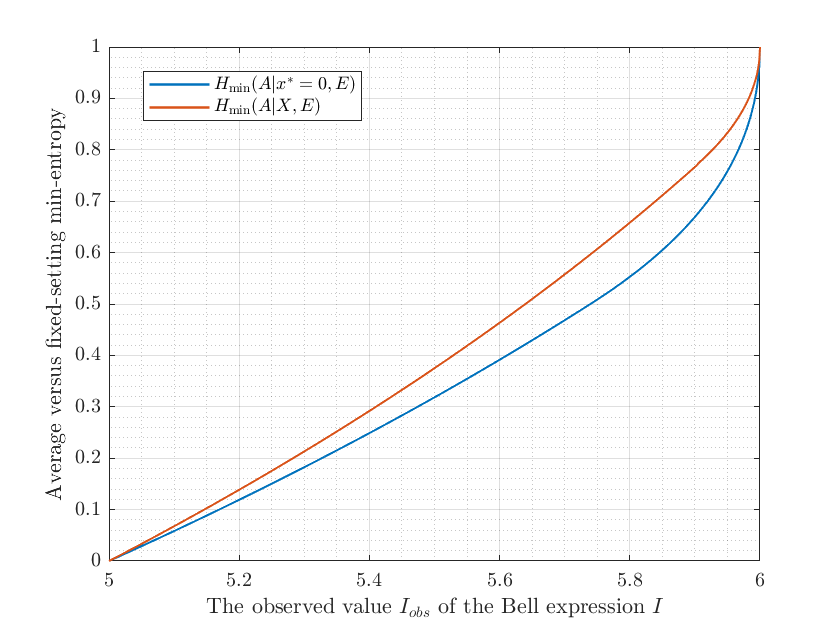}
\label{fig_hmin}
\end{minipage}%
}
\centering
\caption{(a) Solid lines: The guessing probability $P_g^{(A,q)}(I_{obs},x^*=0)$ of Eq.~\eqref{pg_0} and the average guessing probability ${P}_g^{(A,q)}(I_{obs},\mu_X)$ of Eq.~\eqref{pg_x} for $\mu_X(x)=\frac{1}{3},\forall x\in\{0,1,2\}$ as functions of the observed quantum value $I_{obs}$ of the Bell expression $I$ Eq.~\eqref{Id}, obtained from the candidate guessing strategies. Dashed lines: the upper bounds on the guessing probability and average guessing probability derived from the level-3 NPA hierarchy~\cite{NPA1,NPA2}. 
(b) The min-entropy $H_{\min}(A|x^*=0,E)$ and the average min-entropy $H_{\min}(A|X,E)$ as functions of the observed quantum value $I_{obs}$ of the Bell expression $I$ Eq.~\eqref{Id}.}
\end{figure}

\subsection{The Optimal Average Guessing Probability}
We now turn to the case of the average guessing probability, where Eve attempts to guess Alice’s outcomes across all three measurement settings $A_0, A_1, A_2$, which are chosen uniformly with distribution $\mu_X(x)=\frac{1}{3},\forall x\in\{0,1,2\}$. We assume that Eve prepares the same shared state and measurements as in the previous subsection~\ref{sec_simplex_fix}. As explained earlier, when Alice measures $A_0$, the correlation between her outcome and Eve’s system is described by Eq.~\eqref{cq_state1}, and Eve applies the two-outcome POVM $E_0$ on her system to guess Alice’s result. Now, when Alice measures $A_1$, the correlation between her classical outcome and Eve’s system is described by the following classical–quantum state:
\begin{equation}\label{cq_state2}
\begin{split}
	&\frac{1}{2} [+1]_{A_1} \otimes \left[ \left( \frac{1}{2} + \frac{1}{4\sin 2\theta} \right) |e_0\>\<e_0| + \left( \frac{1}{2} - \frac{1}{4\sin 2\theta} \right) |e_1\>\<e_1| \right]\\
	+ &\frac{1}{2} [-1]_{A_1} \otimes \left[ \left( \frac{1}{2} - \frac{1}{4\sin 2\theta} \right) |e_0\>\<e_0| + \left( \frac{1}{2} + \frac{1}{4\sin 2\theta} \right) |e_1\>\<e_1| \right].
\end{split}
\end{equation}
We define
\begin{equation}
	\begin{split}
		\sigma_0 &:= \left( \frac{1}{2} + \frac{1}{4\sin 2\theta} \right) |e_0\>\<e_0| + \left( \frac{1}{2} - \frac{1}{4\sin 2\theta} \right) |e_1\>\<e_1|,\\
		\sigma_1 &:= \left( \frac{1}{2} - \frac{1}{4\sin 2\theta} \right) |e_0\>\<e_0| + \left( \frac{1}{2} + \frac{1}{4\sin 2\theta} \right) |e_1\>\<e_1|.
	\end{split}
\end{equation}

Again, Eve's optimal guessing strategy in this case is to perform the measurement that optimally discriminates between the two mixed states $\sigma_0$ and $\sigma_1$ with uniform prior probabilities. The minimum-error discrimination is achieved by the two-outcome POVM $E_1 = \{E_{1}^{0},\,E_1^{1}\}$, where $E_1^0$ ($E_1^{1}$) is the projector onto the positive (negative) eigenspace of the operator $\frac{1}{2}\sigma_0 - \frac{1}{2}\sigma_1$. The resulting guessing probability under this strategy is 
\begin{equation}
	\widetilde{P}_{g,x^*=1}^{(A,q)} = \frac{1}{2} \left( 1 + \left \| \frac{1}{2} \sigma_0 - \frac{1}{2} \sigma_1 \right\|_{tr} \right) = \frac{1}{2} \left( 1 + \frac{\cos 2\theta}{2 \sin 2\theta} \right).
\end{equation}
Similarly, when Alice measures $A_2$, the classical–quantum state is equivalent to Eq.~\eqref{cq_state2}, and thus $\widetilde{P}_{g,x^*=2}^{(A,q)} = \widetilde{P}_{g,x^*=1}^{(A,q)}$. Thus the average guessing probability among all three settings is:
\begin{equation}
	\widetilde{P}_{g,\mu_X}^{(A,q)} = \frac{1}{3} \widetilde{P}_{g,x^*=0}^{(A,q)} + \frac{2}{3}\widetilde{P}_{g,x^*=1}^{(A,q)} = \frac{1}{2} + \frac{1}{6} \left( \cos 2\theta + \frac{\cos 2\theta}{\sin 2\theta} \right).
\end{equation}
Eliminating the parameter $\theta$ in $\widetilde{P}_{g,\mu_X}^{(A,q)}$ and the observed quantum value $I_{obs}$ from Eq.~\eqref{wq} yields the average guessing probability as a function of $I_{obs}$, denote it by $f_{\text{ave}} (I_{obs})$
\begin{equation}\label{pg_w_ave}
	\begin{split}
		f_{\text{ave}} (I_{obs}) &= \frac{1}{2} + \frac{1}{6} \left( 1 + \frac{8}{I_{obs} - 1 + \sqrt{(I_{obs} + 3)(I_{obs} - 5)}} \right) \sqrt{1 - \frac{1}{64} \left( I_{obs} - 1 + \sqrt{(I_{obs} + 3)(I_{obs} - 5)} \right)^2 }.
	\end{split}
\end{equation}

Similar to the previous case, since this function is not concave and $f_{\text{ave}}(5) = \frac{1}{2} + \frac{\sqrt{3}}{4} < 1$ when $I_{obs} = 5$ (the classical upper bound), which implies that this guessing strategy is not optimal for non-maximum quantum value of the Bell expression. One possible strategy for Eve is to prepare a convex mixture of a local behavior and the quantum behavior that achieves the value $I_{\text{ave}}^{*}$ to obtain quantum value smaller than $I_{\text{ave}}^{*}$. In this circumstance, the guessing probability is the convex hull of $1$ and $f_{\text{ave}}(I_{\text{ave}}^{*})$ with corresponding weights. The value $I_{\text{ave}}^{*}$ is determined by the maximizer of the following optimization problem:
\begin{equation}
	\max_{I_{obs}}\quad \frac{f_{\text{ave}} (I_{obs})-1}{I_{obs}-5} \quad \rightarrow \quad I_{\text{ave}}^{*}=1+4\frac{\left(3+2\sqrt{6}\cos(\frac{1}{3}\arccos\sqrt{\frac{2}{3}})\right)^2-1}{\left(3+2\sqrt{6}\cos(\frac{1}{3}\arccos\sqrt{\frac{2}{3}})\right)^2+1}+\frac{\left(3+2\sqrt{6}\cos(\frac{1}{3}\arccos\sqrt{\frac{2}{3}})\right)^2+1}{\left(3+2\sqrt{6}\cos(\frac{1}{3}\arccos\sqrt{\frac{2}{3}})\right)^2-1}.
\end{equation}

Thus, by using this strategy, the average guessing probability is a piecewise function:
\begin{equation}\label{pg_x}
	{P}_g^{(A,q)}(I_{obs},\mu_X) =
	\begin{cases}
		f_{\text{ave}} (I_{obs}), & I_{obs} \in [I_{\text{ave}}^{*}, 6],\\
		\frac{f_{\text{ave}}(I_{\text{ave}}^{*}) - 1}{I_{\text{ave}}^{*} - 5} (I_{obs} - 5) + 1, & I_{obs} \in [5, I_{\text{ave}}^{*}].
	\end{cases}
\end{equation}

Figure~\ref{fig_pg} displays the analytical curve of Eq.~\eqref{pg_x}, derived from our candidate strategy, alongside the SDP upper bound obtained from the level $3$ of the NPA hierarchy. As shown in Figure~\ref{fig_pg}, the analytical curve in Eq.~\eqref{pg_x} is tight, confirming the optimality of our candidate strategy. The corresponding min-entropy $H_{\min}^{I_{obs}, \mu_X}(A|X,E) = -\log_2({P}_g^{(A,q)}(I_{obs},\mu_X))$ is illustrated in Figure~\ref{fig_hmin}.


\end{document}